\newtheorem{definition}{Definition}%[section] 
\newtheorem{theorem}[definition]{Theorem}
\newtheorem{proposition}[definition]{Proposition}
\newtheorem{lemma}[definition]{Lemma}
\def\prob#1{{\mathbb P}\left(#1\right)}
\newcommand{\A}{{\mathcal A}}
\newcommand{\E}{{\mathcal E}}
\newcommand{\rank}{{\rm rank}}
\title{RankSign : an efficient signature algorithm based on the rank metric}
\author{Philippe Gaborit\thanks{Universit\'e de Limoges, XLIM-DMI,
    123, Av. Albert Thomas, 87060 Limoges Cedex,
    France. {\tt gaborit,schrek,ruatta@unilim.fr}} \and Olivier
  Ruatta$^*$ \and Julien Schrek$^*$ \and Gilles Z\'emor\thanks{Universit\'e de
    Bordeaux, Institut de Mathématiques, UMR 5251, {\tt zemor@math.u-bordeaux.fr}}}
\begin{document}

\maketitle

\begin{abstract}
We propose a new approach to code-based signatures that
makes use in particular of rank metric codes. When the classical approach consists in finding the unique
preimage of a syndrome through a decoding algorithm, we propose to introduce the notion
of mixed decoding of erasures and errors for building signature schemes. 
In that case the difficult problem becomes, as is the case in lattice-based cryptography,
finding a preimage of weight above the Gilbert-Varshamov bound (case where
many solutions occur) rather than finding a unique preimage of weight below
the Gilbert-Varshamov bound. The paper describes RankSign: a
 new signature algorithm for the rank metric 
based on a new mixed algorithm for decoding erasures and errors for
the recently introduced Low Rank Parity Check (LRPC) codes.
We explain how it is possible (depending on choices
of parameters) to obtain a full decoding algorithm which is able 
to find a preimage of reasonable rank weight for any random syndrome
with a very strong probability. We study the semantic security
of our signature algorithm and show how it is possible to reduce
the unforgeability to direct attacks on the public matrix, so that 
no information leaks through signatures. Finally, we give several examples of parameters
for our scheme, some of which  with public key of size $11,520$ bits and signature of size $1728$ bits.
Moreover the scheme can be very fast for small base fields.
\end{abstract}

{\bf Keys words: post-quantum cryptography, signature algorithm, code-based cryptography, rank metric}

%\tableofcontents

\section{Introduction}

In the last few years there has been a burst of activity in post-quantum cryptography.
Interest for the field has indeed increased significantly since the recent attacks on 
the discrete logarithm problem in small characteristic {\cite{Joux}, which 
shows that finding new attacks on classical
cryptographic systems is always a possibility and that it is important
to have alternatives.

Among potential candidates for alternative cryptography, lattice-based and code-based cryptography 
are strong candidates. In this paper we consider the signature problem for code-based cryptography
and especially rank metric based cryptography.
The problem of finding an efficient signature algorithm 
has been a major challenge for code-based cryptography since its introduction in 1978
by McEliece. Signing with error-correcting codes can be achieved in different ways:
the CFS algorithm \cite{CFS} considers extreme parameters of Goppa codes to obtain
a class of codes in which a non-negligeable part of random syndromes are invertible.
This scheme has a very small signature size, however it is rather slow and the public
key is very large. Another possibility is to use the Fiat-Shamir heuristic to turn
a zero-knowledge authentication scheme (like the Stern authentication scheme \cite{Ste96})
into a signature scheme. This approach leads
to very small public keys of a few hundred bits and is rather fast, but the signature
size in itself is large (about 100,000b), so that overall no wholly satisfying scheme
is known. 

Classical code-based cryptography relies on the Hamming distance 
but it is also
possible to use another metric: the rank metric. This metric introduced 
in 1985 by Gabidulin \cite{Gab85} is very different from the Hamming
distance. The rank metric has 
received
in recent years very strong attention from the coding community
because of its relevance 
to network coding. Moreover, this metric can also be used for cryptography. Indeed
it is possible to construct rank-analogues of Reed-Solomon codes: the
Gabidulin codes. Gabidulin codes inspired early cryptosystems, like the GPT cryposystem
(\cite{GPT91}), but they turned out to be
inherently vulnerable because of the
very strong structure of the underlying codes. More recently, by considering an approach similar
to NTRU \cite{ntru}(and also MDPC codes \cite{mdpc}) 
 constructing a very efficient cryptosystem
based on weakly structured rank codes was shown to be possible~\cite{GMRZ13}.
However, in terms of signatures based on the rank metric, 
only systems that use Fiat-Shamir are presently known \cite{GSZ11}. 
Overall the main interest of rank-metric based cryptography is that the complexity of the
best known attack grows very quickly with the size of parameters:
Contrary to (Hamming) code-based or to lattice-based cryptography,
it is possible to obtain {\it a general instance of the rank decoding problem} 
with size only a few thousands bits for (say) $2^{80}$ security, 
when such parameter sizes can be obtained only with
additional structure (quasi-cyclic for instance) for code-based or lattice based cryptography.

An interesting point in code-based  cryptography is that in general the security
of the protocols relies on finding small weight vectors {\it below} the Gilbert-Varshamov bound
(the typical minimum weight of a random code).
This is noticeably different from lattice based cryptography for which it is very common for
the security of a signature algorithm \cite{GGH,MR08} to rely on the capacity to approximate a random vector
far beyond its closest lattice vector element (the Gap-CVP problem).

Traditionally, this approach was not developed for code-based cryptography since no 
decoding algorithm is known that decodes beyond the Gilbert-Varshamov bound: 
in fact this problem is somewhat marginal for
 the coding community since it implies
many possibilities for decoding, while the standard objective is 
to find the most probable codeword or a short list of most likely codewords.

%It turns out that for rank metric the situation is completely different and what
%seems not possible in Hamming metric is possible for rank metric. More precisely
%it is possible to construct codes which decode (or approximate) a random syndrome
%beyond the Gilbert-Varshamov bound in rank metric. The codes which permit to obtain
%these results cannot be directly used for cryptography since their structure
%can be easily recovered.
 
{\bf Our contribution} 

The main contribution of this paper is the introduction of a new way of considering
code-based signatures, by introducing the idea that it is possible to invert
a random syndrome not below the Gilbert-Varshamov bound, but above it.
The approach is similar in spirit to what is done in lattice-based
cryptography. We describe a new algorithm for LRPC codes, a
recently introduced class of rank codes, the new algorithm allows in
practice to
decode both errors and (generalized) rank erasures. This new algorithm
enables us to approximate a syndrome beyond the Gilbert-Varshamov bound.
The algorithm is a unique
decoder (not a list decoder) but can give different solutions depending
on the choice of the erasure. We shall explain precisely in which conditions
one can obtain successful decoding for any given syndrome and give the
related probabilistic analysis.
Based on this error/erasure algorithm we propose a new signature
scheme --
RankSign. We give conditions for which no information leakage is possible
from real signatures obtained through our scheme. This is a
significant point
since information leaking from real signatures was the weakness
through which the NTRUSign scheme came 
to be attacked \cite{ntrusign,DuPh12,PhRe06}. Finally, we give examples of parameters:
they are rather versatile, and their size depends on a bound on the amount
of potentially leaked information. In some cases one obtains
public keys of size 11,000 bits with signatures of length 1728 bits, moreover
the scheme is rather fast.

The paper is organized as follows: Section~\ref{sec:background} recalls basic facts on
the rank metric, Section~\ref{sec:approximating}
introduces LRPC codes and describes a new mixed algorithm for decoding (generalized) erasures and errors,
and studies its behaviour, Section~\ref{sec:RankSign} shows how to use them for
cryptography, and lastly, Section 5 and 6  consider security and parameters for these schemes. The details of some proofs and attacks are also given in the appendix. 
 
\section{Background on rank metric codes and cryptography}
\label{sec:background}
\subsection{Definitions and notation}
        \textbf{Notation :}
        Let $q$ be a power of a prime $p$, $m$ an integer and let $V_n$ be a
$n$ dimensional vector space over the finite field ${\rm GF}(q^m)$.  Let $\beta =(\beta _1,\dots ,\beta _m)$ be a basis of $GF(q^m)$ over $GF(q)$.\\ Let $\mathcal{F}_i$ be the map from 
$GF(q^m)$ to $GF(q)$ where $\mathcal{F}_i(x)$ is the $i$-th coordinate of $x$ in the basis $\beta$.\\ To any $v=(v_1,\dots ,v_n)$ in $V_n$ we associate the matrix $\overline{v} \in \mathcal{M}_{m,n}(F_{q})$ in which $\overline{v}_{i,j}=\mathcal{F}_i(v_j)$. \\ The rank weight of a vector $v$ can be defined as the rank of the associated matrix $\overline{v}$. If we name this value ${\rm rank}(v)$ we can define a distance between two vectors $x,y$ through the formula ${\rm d_r}(x,y)={\rm rank}(x-y)$. 
{\it Isometry for rank metric:} in the
rank metric context, the notion of isometry differs from the Hamming
distance context:
while for Hamming distance isometries are permutation matrices, for
the rank metric isometries
are invertible $n \times n$ matrices on the base field $GF(q)$ (indeed these matrices, usually
denoted by $P$, do not change the rank of a codeword).  We refer to \cite{Loi06} for more details on codes for the rank distance.

%% correction Olivier 
A rank code $C$ of length $n$ and dimension $k$ over $GF(q^m)$
is a subspace of dimension $k$ of $GF(q^m)$ viewed as a (rank) metric space.
The minimum rank distance of
the code $C$ is the minimum rank of non-zero vectors of the code. 
 In the following, ${C}$ is a rank metric code of length $n$ and dimension $k$ over
$GF(q^m)$. The matrix $G$ denotes a $k \times n$ generator matrix of $\mathcal{C}$ and $H$ one of its parity check matrices.

\begin{definition} \label{support}
Let $x=(x_1,x_2,\cdots,x_n) \in GF(q^m)^n$ be a vector of rank $r$. We denote $E$ the $GF(q)$-sub vector space of $GF(q^m)$ generated by $x_1,x_2,\cdots,x_n$. The vector space $E$ is called the {\bf support} of $x$.
\end{definition} 

{\bf Remark:}
  The notion of support of a codeword for the Hamming distance and
  for the the one introduced in definition~\ref{support} are different but they share a common principle: in both cases, suppose one is given a syndrome $s$ and that
there exists a low weight vector $x$ such that $H.x^t=s$, then, if the support of $x$ is known,
it is possible to recover all the coordinates values of $x$ by solving a linear system.

\begin{definition} \label{erasure}
Let $e$ be an error vector of rank $r$ and error support space $E$. We call
{\bf generalized erasure of dimension $t$} of the error $e$, a subspace $T$ of dimension $t$
of its error support $E$.
\end{definition}

The notion of erasure for Hamming distance corresponds to knowing a particular
position of the error vector (hence some partial information on the support), 
in the rank distance case, the support of the error
being a subspace $E$, the equivalent notion of erasure (also denoted generalized erasure) is therefore
the knowledge of a subspace $T$ of the error support $E$.

\subsection{Bounds for rank metric codes}

The classical bounds for the Hamming metric have straightforward rank
metric analogues,
since two of them are of interest for the paper we recall them below.

\subsubsection{Rank Gilbert-Varshamov bound [GVR]}
The number of elements $S(m,q,t)$ of a sphere of radius $t$ in $GF(q^m)^n$, is equal to the number
of $m \times n$ $q$-ary matrices of rank $t$. For $t=0$ $S_0=1$, for $t \ge 1$ we have
(see \cite{Loi06}):
$$ S(n,m,q,t)= \prod_{j=0}^{t-1} \frac{(q^n-q^j)(q^m-q^j)}{q^t-q^j}$$

From this we deduce the volume of a ball $B(n,m,q,t)$ of radius $t$ in $GF(q^m)$ to be:
$$B(n,m,q,t)= \sum_{i=0}^tS(n,m,q,i).$$

In the linear case the Rank Gilbert-Varshamov bound $GVR(n,k,m,q)$ 
 for a $[n,k]$ linear code over $GF(q^m)$ is then defined as the smallest integer $t$ such that 
$B(n,m,q,t) \ge q^{m(n-k)}$.

The Gilbert-Varshamov bound for a rank code $C$ with dual matrix $H$
 corresponds to the smallest rank weight $r$ for which, for any 
syndrome $s$, there exists on the average  a word $x$ of rank weight
$r$ such that $H.x^t=s$.
To give an idea of the behaviour of this bound, 
it can be shown that, asymptotically in the case $m=n$ (\cite{Loi06}): 
$\frac{GVR(n,k,m,q)}{n} \sim 1-\sqrt{\frac{k}{n}}$.

\subsubsection{Singleton bound} 

The classical Singleton bound for a linear $[n,k]$ rank code of minimum rank
$r$ over $GF(q^m)$
works in the same way as for Hamming linear codes (by finding an information set)
and reads $r \le 1+n-k$: in the case when $n > m$ this bound
can be rewritten as $r \le 1+ \lfloor \frac{(n-k)m}{n} \rfloor$ \cite{Loi06}.
Codes achieving this bound  are called Maximum Rank Distance codes (MRD).

\subsection{Cryptography and rank codes}\label{sec:rankcodes}

The main use of rank codes in the cryptographic context is through the
rank analogue of the classical syndrome decoding problem.

{\bf Maximum Likelihood - Rank Syndrome Decoding problem (ML-RSD)}
                Let $H$ be an $(n-k)\times n$ matrix over $GF(q^m)$ with $k \le n$, $s \in GF(q^m)^{n-k}$ 
. The problem is to find the smallest weight $r$ such that ${\rm
  rank}(x)=r$ and $Hx^t=s$.

The computational complexity of this problem, which was unkown for more
than 20 years, was recently assessed theoretically in \cite{GZ16}. In
this paper a
randomized reduction to the Hamming distance Syndrome Decoding problem
distance is given. More precisely, it is proved that if there exists a polynomial
algorithm which solves the RSD problem, then NP $\subset$ RP, which is very
unlikely. Results also extend to the case of approximation of
the rank distance of a code by a constant.

Besides the theoretical hardness of the RSD problem, practical attacks on the problem have a complexity which increases very quickly with the parameters.

% In that case it is not proven that the problem is $NP$-hard, but this problem
% is very close to the syndrome decoding problem which is NP-hard, moreover the problem
% can be seen as a structured version of the MinRank problem which is also NP-hard
% (the RSD problem can be attacked as a MinRank problem but in practice
% the attack does not work since there are too many unknowns \cite{FLP08}).
% Moreover the problem has been studied for more than 20 years and the best algorithms
% are exponential, so that the problem is generally believed to be hard.

There exist several types of generic attacks on the problem:

$\bullet$ {\bf combinatorial attacks}: these attacks are usually the best ones for small values
of $q$ (typically $q=2$) and when $n$ and $k$ are not too small
(typically $30$ and more): 
when $q$ increases, the combinatorial aspect makes them less efficient.
The first non-trivial attack on the problem was proposed by Chabaud and Stern \cite{CS96} in 1996,
then in 2002 Ourivski and Johannson \cite{OJ02} improved the previous attack and proposed
a new attack: however, these two attacks did not take account of the value of $n$ in the
exponent. They were generalized recently in \cite{GRS12} by Gaborit et {\it al.}
in $(n-k)^3m^3q^{(r-1)\lfloor\frac{(k+1)m}{n}\rfloor}))$ and 
take the value of $n$ into account and were used to break some
repaired versions of the GPT cryposystem. 

$\bullet$ {\bf algebraic attacks and Levy-Perret attack}: the particular nature of the rank metric makes it a natural field
for algebraic attacks and solving by Groebner basis, since these attack are largely independent of the value of $q$
and in some cases may also be largely independent of~$m$.
These attacks are usually the most efficient when $q$ increases and when 
the parameters are not too high (say less than $30$).
There exist different types of algebraic equations settings: 
the first one by Levy and Perret \cite{LP06} in 2006
considers a quadratic setting by taking as unknowns the support $E$ of the error and the error
coordinates regarding $E$, there is also the Kernel attack by \cite{FLP08} and the minor approach
which consists in considering multivariate equations of degree $r+1$ obtained from minors of matrices
\cite{issac10}, and 
more recently the annulator setting by Gaborit et {\it al.} in \cite{GRS12}
(which is valid on certain type of parameters but may not be independent of $m$).
In our context
for some of the parameters considered in the end of the paper, the Levy-Perret attack is the
most efficient one to consider. The attack works as follows: suppose one starts from
an $[n,k]$ rank code over $GF(q^m)$ and we want to solve the RSD problem for an error 
$e$ of rank weight $r$, the idea of the attack is to consider the support $E$ of $e$
as unknowns together with the error coordinates, it gives $nr+m(r-1)$ unknowns and $m(2(n-k)-1)$
equations from the syndrome equations. One obtains a quadratic system, on which one can use
Groebner basis. All the complexities for Grobner basis attacks
are estimated through the very nice program of L. Bettale \cite{bettale}. 
In practice this attack becomes too costly whenever $r \ge 4$ for not too small
$n$ and $k$.

{\bf The case of more than one solution: approximating beyond the GVR bound}

In code based cryptography there is usually only one solution to the syndrome
problem (for instance for the McEliece scheme), now in this situation
we are interested in the case when there are a large number of solutions.
This case is reminiscent of lattice-based cryptography
when one tries to approximate as much as possible a given syndrome
by a word of weight as low as possible.

%In the case of coding theory, this is not often the case (but happens
%for instance in the code-based hash function FSB ), and when it happens
%one usually does not try to decode very much higher than the Gilbert-Varshamov bound.

This motivates us to introduce a new problem
which corresponds to finding a solution to the general decoding problem
for the case when the weight of the word associated to the syndrome 
is greater than the GVR bound, in that case there may be several solutions,
and hence the term decoding does not seem well chosen.
Notice that in a lattice cryptography context, it corresponds to the
case of Gap-CVP, which does not make sense here, since it implies
a multiplicative gap. 

\medskip

{\bf Approximate - Rank Syndrome Decoding problem (App-RSD)}
                Let $H$ be an $(n-k)\times n$ matrix over $GF(q^m)$ with $k \le n$, $s \in GF(q^m)^{n-k}$
and let $r$ be an integer. The problem is to find a solution of rank  $r$ such that ${\rm rank}(x)=r$ and $Hx^t=s$.

\smallskip

Even though the recent results of \cite{GZ16} show that the problem of approximation of the rank distance remains hard, there are cases for which the problem is easy, that we want to consider.

It is helpful to first consider the situation of a binary linear $[n,k]$
Hamming metric code.
Given a random element of length $n-k$ of the syndrome space,
we know that with high probability there exists a word that has this
particular syndrome and whose weight is on
the GV bound.
This word is usually hard to find, however. Now what is the lowest minimum weight
for which it is easy to find such a word ? A simple approach consists
in taking $n-k$ random column of the parity-check matrix (a potential
support of the solution word) and inverting 
the associated matrix, multiplying by the syndrome gives us
a solution of weight $(n-k)/2$ on average. In fact 
it is difficult to do better than this without a super-polynomial increase
in complexity.

Now for the rank metric, one can apply the same approach: suppose
one starts from a random $[n,k]$ code over $GF(q^m)$ and that one searches 
for a word of small rank weight $r$ with a given syndrome. One fixes
(as in the Hamming case)
a potential support for the word - here a subspace of dimension $r$ of $GF(q^m)$-
and one tries to find a solution. Let $x=(x_1,\cdots,x_n)$ be a solution vector,
so that $H.x^t=s$. If we consider the syndrome equations induced in the small
field $GF(q)$, there are $nr$ unknowns and $m(n-k)$ equations.
Hence it is possible (with a good probability)
to solve the system whenever $nr \ge m(n-k)$, meaning
it is possible to find in probabilistic polynomial time a solution to
a typical instance of the RSD problem
whenever $r \ge  \lceil \frac{m(n-k)}{n} \rceil$, which corresponds 
to the Singleton bound. This proves the following proposition:

\begin{proposition}
There is a probabilistic polynomial time algorithm that solves random
instances of the App-RSD problem in polynomial time when 
 $r \ge  \lceil \frac{m(n-k)}{n} \rceil$.
\end{proposition}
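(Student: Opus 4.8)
The plan is to make precise the support-guessing strategy sketched just above and to verify that, for a random instance, the associated $GF(q)$-linear system is solvable with good probability. First I would fix a candidate support, namely a subspace $E \subseteq GF(q^m)$ of dimension $r$ over $GF(q)$, and choose a basis $(e_1, \dots, e_r)$ of $E$. We look for a solution $x = (x_1, \dots, x_n)$ all of whose coordinates lie in $E$; writing $x_j = \sum_{\ell=1}^r x_{j\ell}\, e_\ell$ with $x_{j\ell} \in GF(q)$ turns the $n$ field coordinates into $nr$ scalar unknowns over $GF(q)$. Since $H$ has entries in $GF(q^m)$ and the $e_\ell$ are fixed, each entry of $H x^t$ is a $GF(q^m)$-linear form in the $x_{j\ell}$; projecting onto the basis $\beta$ via the maps $\mathcal{F}_i$ expands the $n-k$ syndrome equations over $GF(q^m)$ into exactly $m(n-k)$ linear equations over $GF(q)$. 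Thus, once $E$ is fixed, finding $x$ with support in $E$ and $Hx^t = s$ amounts to solving an inhomogeneous $GF(q)$-linear system with $nr$ unknowns and $m(n-k)$ equations, which is done in polynomial time by Gaussian elimination.

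Next I would count the degrees of freedom. The system is solvable for every right-hand side exactly when its coefficient matrix $M \in \mathcal{M}_{m(n-k),\, nr}(GF(q))$ is surjective as a map $GF(q)^{nr} \to GF(q)^{m(n-k)}$; a necessary dimension condition is $nr \ge m(n-k)$, equivalently $r \ge m(n-k)/n$, which for integer $r$ is precisely $r \ge \lceil m(n-k)/n \rceil$. Under this hypothesis the algorithm picks a random $E$, forms $M$ together with the vector encoding $s$, and returns any solution of the linear system, so the whole procedure is manifestly probabilistic polynomial time.

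The heart of the argument is to show that for a random instance the system is actually solvable and that the output has rank exactly $r$. For solvability I would argue that, over the random choice of $H$ (and of $E$), the matrix $M$ behaves like a uniformly random $m(n-k) \times nr$ matrix over $GF(q)$; such a matrix has full row rank with probability at least $\prod_{i \ge 1}(1 - q^{-i}) > 0$, bounded away from $0$ uniformly in the dimensions and close to $1$ once $nr$ exceeds $m(n-k)$ by a small margin. Whenever $M$ is surjective, every syndrome $s$ has a preimage, so the instance is solved. For the rank claim I would note that, conditioned on solvability, the solution set is an affine subspace of dimension $nr - m(n-k)$, and that a generic point of it has coordinates $x_1, \dots, x_n$ spanning all of $E$, whence $\rank(x) = r$; in the rare degenerate event that the output has rank $< r$ one simply resamples $E$ or perturbs within the solution space.

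The step I expect to be the main obstacle is the full-rank claim for $M$: justifying that the structured matrix obtained by expanding $H$ over $\beta$ and restricting the support to $E$ is random enough to enjoy the generic full-rank probability. For a genuinely random parity-check matrix this is what the proposition quietly relies on, but a careful proof would need either an explicit lower bound on the probability that the expanded system is surjective, or an appeal to the non-degeneracy of the specific codes used later. The remaining counting and the polynomial-time claim are routine.
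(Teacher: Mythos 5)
Your proposal follows essentially the same approach as the paper: fix a candidate support $E$ of dimension $r$, expand the syndrome equations over $GF(q)$ to get $nr$ unknowns and $m(n-k)$ equations, and solve the resulting linear system when $nr \ge m(n-k)$. The ``main obstacle'' you flag --- justifying that the expanded matrix is full rank with good probability --- is in fact left equally unjustified in the paper, which simply asserts the system is solvable ``with a good probability,'' so your treatment is, if anything, more careful than the original.
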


For a rank weight $r$ below this bound, the best known attacks are, as
in the Hamming distance case,
obtained by considering the cost of finding a word of rank $r$ divided by the 
number of potential solutions: $\frac{B(n,k,m,q)}{q^{m(n-k)}}$.
In practice the complexity we find is coherent with this.

\section{Approximating a random syndrome beyond the GVR bound with LRPC codes}
\label{sec:approximating}
\subsection{Decoding algorithm in rank metric}
The rank metric has received a lot of attention in the context of network coding \cite{SKK10}.
There exist very few algorithms, however, for decoding codes in the rank metric. The most well-known
$[n,k]$ codes which are decodable are the Gabidulin codes
\cite{Gab85}. These codes can correct 
up to $\frac{n-k}{2}$ errors, and have been proposed for encryption: but since they cannot decode up to the 
GVR bound, they do not seem suitable for full decoding in the spirit of \cite{CFS} for signature algorithms. Another more recent 
family of decodable codes are the LRPC codes \cite{GMRZ13}, these
codes are defined through a low rank matrix.

\begin{definition}
  A Low Rank Parity Check (LRPC) code of rank $d$, length $n$ and
  dimension~$k$ over $GF(q^m)$ is a code defined 
 by an $(n-k) \times n$ parity check matrix $H=(h_{ij})$, such that all its
coordinates $h_{ij}$ belong to the same $GF(q)$-subspace $F$ of dimension $d$ of $GF(q^m)$. 
We denote by $\{F_1,F_2,\cdots,F_d\}$ a basis of $F$.
\end{definition}

These codes can decode with a good probability up to $\frac{n-k}{d}$ errors, they can be used for encryption
\cite{GMRZ13}, but since they can decode only up to  $\frac{n-k}{2}$
errors at best, they also seems unsuitable for signature algorithms.

\subsection{Using LRPC codes to approximate a random syndrome beyond the GVR bound}\label{sec:UsingLRPC}

\subsubsection{High level overview}

The traditional approach for decoding random syndromes, that is used by the CFS scheme for instance,
consists in taking advantage of the decoding properties of a code
(e.g. a Goppa code) and in considering parameters for which the
proportion of decodable vectors -- the decodable density -- 
is not too low. For the Hamming metric, this approach leads to 
very flat dual matrices, i.e., codes with high rate and very low Hamming
distance.
In the rank metric case, this approach leads to very small decodable
densities and does not work in practice.
However, it is possible to proceed otherwise. 
It turns out that the decoding algorithm of LRPC codes can be adapted so that 
it is possible to decode not only errors but also (generalized) erasures.
This new decoding algorithm allows us to decode more rank errors since
the support is then partially known.
In that case since the size of the balls depends directly on the dimension
of the support, it leads to a dramatic increase of the size of the decodable balls.
Semantically, what happens is that the signer can fix an erasure space, which relaxes
the condition for finding a preimage.
 This approach works because in the particular case of 
our algorithm, it is possible to consider the erasure space at no cost 
in terms of error correction: to put it differently, 
the situation for LRPC is different from traditional Hamming metric
codes for which ``an error equals two erasures''.

In practice it is possible to find parameters (not flat at all) for which it is possible
to decode a random syndrome with the constraint that its support contains a fixed random
subspace. Fixing part of the rank-support of the error, (the generalized
erasure) allows us more rank-errors.
For suitable parameters, the approach works then as follows: for a given 
random syndrome-space element $s$, one chooses a random subspace $T$ of fixed
dimension $t$ (a generalized erasure of Definition 2), 
and the algorithm returns a small rank-weight word, whose rank-support $E$ contains $T$,
and whose syndrome is the given element $s$. Of course, there is no unicity 
of the error $e$ since different choices of $T$ lead to different errors $e$,
which implies that the rank of the returned error is above the GVR
bound: it is however only just above the GVR bound for the right choice of parameters.

\subsubsection{LRPC decoding with errors and generalized erasures}

\paragraph{Setting:}
Let  an $[n,k]$ LRPC code be defined by an $(n-k)\times n$ parity-check matrix
$H$ whose entries lie in a space $F\subset GF(q^m)$ of small dimension
$d$. Let $t$ and $r'$ be two parameters such that
$$r' \le \frac{n-k}{d}.$$
Set $r=t+r'$. 
Given an element of the syndrome
space $s$, we will be looking for a rank $r$ vector $e$ of $GF(q^m)^n$
with syndrome $s$. We first look for an acceptable subspace $E$ of 
dimension $r$ of $GF(q^m)$ and then solve the linear system $H.e^t=s$
where $e\in E^n$. To this end we choose a random subspace $T$ of
dimension $t$ of $GF(q^m)$ and impose the condition $T\subset E$.

The subspace $T$ being fixed,
we now describe the set of decodable elements of the syndrome space.
We will then see how to decode them.

\begin{definition}\label{def:Tdecodable}
  Let $F_1$ and $F_2$ be two fixed linearly independent elements of the space $F$.
  We shall say that an element $s\in GF(q^m)^{n-k}$ of the syndrome
  space is {\em $T$-decodable} if there exists a rank $r$
  subspace $E$ of $GF(q^m)$
  satisfying the following conditions.
  \begin{enumerate}
\item[{\rm (i)}] $\dim \langle FE\rangle=\dim F\dim E$,
\item[{\rm  (ii)}] $\dim (F_1^{-1}\langle FE\rangle\cap F_2^{-1}\langle FE\rangle) =\dim E$,
\item[{\rm  (iii)}] the coordinates of $s$ all belong to the space $\langle
  FE\rangle$ and together with the elements of the
space $\langle FT\rangle$ they generate the
  whole of $\langle FE\rangle$.
\end{enumerate}
\end{definition}

\paragraph{Decoding algorithm.}
We now argue that if a syndrome $s$ is $T$-decodable, we can
effectively find $e$ of rank $r$ such that $H.e^t=s$. We first
determine the required support space~$E$. Since the decoder knows the
subspaces $F$ and $T$, he has access to the product space $\langle
FT\rangle$. He can then construct the subspace $S$ generated by
$\langle FT\rangle$ and the coordinates of $s$. Condition (iii) of
$T$-decodability ensures that the subspace $S$ is equal to $\langle
FE\rangle$
for some $E$, and since 
$$F_1^{-1}\langle FE\rangle\cap F_2^{-1}\langle FE\rangle \supset E,$$
condition (ii) implies that $E$ is uniquely determined and that the
decoder recovers $E$ by computing the intersection of subspaces
$F_1^{-1}S\cap F_2^{-1}S$.

It remains to justify that once the subspace $E$ is found, we can
always find $e$ of support $E$ such that $H.e^t=s$. This will be the
case if the mapping
\begin{eqnarray}
    E^n& \rightarrow & \langle FE\rangle^{n-k}\label{eq:mappingEn}\\
      e& \mapsto     & H.e^t \nonumber
\end{eqnarray}
can be shown to be surjective.
Extend $\{F_1,F_2\}$ to a basis $\{F_1,\cdots,F_d\}$ of $F$
and let  $\{E_1,\cdots,E_r\}$ be a basis of $E$.
Notice that the system $H.e^t=s$ can be rewritten formally
as a linear system in the small field $GF(q)$ where the coordinates
of $e$ and the elements of $H$ are written 
in the basis $\{E_1,\cdots,E_r\}$ and $\{F_1,\cdots,F_d\}$
respectively, and where the syndrome coordinates are written in the product basis
 $\{E_1.F_1,\cdots,E_r.F_d\}$. We therefore have a
linear system with $nr$ unknowns
and $(n-k)rd$ equations over $GF(q)$ that is defined by an
$nr \times (n-k)rd$ formal matrix $H_f$ ({\it say})
whose coordinates are functions only of $H$ (see \cite{GMRZ13} for more details on how to obtain $H_f$ from $H$). 

We now see that the
matrix $H$ can be easily chosen so that the matrix $H_f$ is of maximal
rank $nr$, which makes the mapping \eqref{eq:mappingEn} surjective,
{\em for any subspace $E$} of dimension $d$ satisfying condition (i)
of $T$-decodability.

{\bf Remarks:}
\begin{enumerate}
\item For applications, we will consider only the case where
  $nr=(n-k)rd$, meaning that the mapping \eqref{eq:mappingEn} is
  always one-to-one.
\item The system $H.e^t=s$ can be formally inverted and stored in a
  pre-processing phase, so that the decoding complexity is only that
  of multiplication by a square matrix of length $nr$, rather than a
  cubic inversion.
\item In principle, the decoder could derive the support $E$ by
  computing
  \begin{equation}
    \label{eq:E1toEd}
    E=F_1^{-1}S \cap \cdots \cap F_d^{-1}S
  \end{equation}
  rather than simply $E=F_1^{-1}S \cap F_2^{-1}S$, and the procedure
  would work in the same way in cases when \eqref{eq:E1toEd} holds but
  not the simpler condition (ii). This potentially increases the set
  of decodable syndromes, but the gain is somewhat marginal and
  condition (ii) makes the forthcoming
  analysis simpler. For similar reasons, when conditions (i)--(iii)
  are not all satisfied, we do not attempt to decode even if there are
  cases when it stays feasible.
\end{enumerate}

Figure \ref{fig:Decoding algorithm of LRPC codes with a fixed partial
  support $T$}
summarizes the decoding algorithm. Note that the decoder can easily
check conditions (i)--(iii), and that a decoding failure is declared
when they are not satisfied.

% We can now deduce an algorithm to decode the LRPC codes with a erasure space $T$ 
% of dimension $t$, the algorithm is described in the typical case, if the intersection
% of the $S_i$ has not the expected dimension $t+(n-k)/d$ of $E$ the algorithm fails.
\begin{figure}[!h]
\centering
\fbox{
  \begin{minipage}{12cm}
{\bf Input}: $T=\langle T_1,\cdots, T_t \rangle$ a subspace of
$GF(q^m)$ of
dimension $t$, $H$ an $(n-k)\times n$ matrix with elements in a
subspace 
$F =\langle F_1,\cdots, F_d \rangle$
of dimension $d$, and $s\in GF(q^m)^{n-k}$.\\
{\bf Output}: a vector $e=(e_1,\ldots e_n)$ such that $s=H.e^t$, 
with $e_i\in E$, $E$ a subspace of dimension $\dim E = r =
t+\frac{n-k}{d}$ satisfying
$T\subset E$.
\begin{enumerate}
\item {\bf Syndrome computations}

a) Compute a basis $B=\{F_1T_1,\cdots, F_dT_t\}$ of the product space
$\langle F.T\rangle$.

b) Compute the subspace $S=\langle B \cup \{s_1,\cdots,s_{n-k}\}\rangle $.

\medskip
\item {\bf Recovering the support $E$ of the error}

Compute the support of the error $E = F_1^{-1}S \cap F_2^{-1}S$,
and compute a basis $\{E_1,E_2,\cdots,E_{r}\}$ of $E$.
\medskip
\item {\bf Recovering the error vector $e=(e_1,\ldots ,e_n)$}

For $1 \le i \le n$, write $e_i=\sum_{i=1}^ne_{ij}E_j$,
solve the system $H.e^t=s$, where the equations $H.e^t$ and the syndrome coordinates $s_i$
are written as elements of the product space $P=\langle E.F\rangle$ in the basis
$\{F_1E_1,\cdots,F_1E_r,\cdots,F_dE_1,\cdots,F_dE_r\}$. The system has $nr$ unknowns (the $e_{ij}$) in $GF(q)$ and $(n-k)rd$ equations from the syndrome.
 \end{enumerate}
  \end{minipage}
} \caption{Algorithm 1: a general errors/erasures decoding algorithm
  for LRPC codes}
\label{fig:Decoding algorithm of LRPC codes with a fixed partial support $T$}
\end{figure}
 
\subsection{Proportion of decodable syndromes for unique decoding of
  LRPC codes}

Signature algorithms based on codes all inject the message space in
some way into the syndrome space and then decode them to form a
signature. We should therefore estimate the proportion of decodable
syndromes. The classical decoding approach tells us to look for a preimage
by $H$ that sits on the Gilbert-Varshamov bound: for typical random
codes, a preimage typically exists and is (almost) unique. Computing
such a preimage is a challenge, however. In our case, we are looking
for a preimage above the Gilbert-Varshamov bound, for which many
preimages exist, but for a fixed (erasure) subspace $T$, decoding
becomes unique again. In the following, we count the number of
$T$-decodable syndromes and show that for some adequate parameter
choices, their proportion can be made to be close to $1$.
It will be convenient to use the following notation.

\begin{definition}
  For a subspace $T$ of $GF(q^m)$ of dimension $t$, denote by
  $\mathcal{E}(T)$ the number of subspaces of dimension $r=r'+t$ that
  contain $T$.
\end{definition}

\begin{lemma}
We have
$$\mathcal{E}(T) = \prod_{i=0}^{r'-1}\left(\frac{q^{m-t-i}-1}{q^{i+1}-1}\right)$$
\end{lemma}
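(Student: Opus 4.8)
The plan is to reduce the count to a standard Gaussian binomial coefficient by passing to a quotient. Recall that $GF(q^m)$ is an $m$-dimensional vector space over $GF(q)$ and that $T$ is a $GF(q)$-subspace of dimension $t$. The key observation is that a subspace $E$ of dimension $r=r'+t$ contains $T$ if and only if its image in the quotient $GF(q^m)/T$ is a subspace of dimension $r-t=r'$. Since $E\mapsto E/T$ is a bijection between the $r$-dimensional subspaces of $GF(q^m)$ containing $T$ and the $r'$-dimensional subspaces of $GF(q^m)/T$, and the latter space has dimension $m-t$, we obtain
$$\mathcal{E}(T)=\binom{m-t}{r'}_q,$$
the Gaussian binomial coefficient counting $r'$-dimensional subspaces of an $(m-t)$-dimensional space over $GF(q)$. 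Expanding this coefficient by its standard product formula $\binom{N}{K}_q=\prod_{i=0}^{K-1}(q^{N-i}-1)/(q^{i+1}-1)$ with $N=m-t$ and $K=r'$ gives exactly $\prod_{i=0}^{r'-1}(q^{m-t-i}-1)/(q^{i+1}-1)$, which is the claimed identity.

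If one prefers a self-contained argument avoiding the quotient, I would count directly. First I would count the ordered tuples $(v_1,\dots,v_{r'})$ of vectors of $GF(q^m)$ for which $T+\langle v_1,\dots,v_{r'}\rangle$ has dimension $r$: the vector $v_i$ must avoid the span of $T$ together with $v_1,\dots,v_{i-1}$, a space of dimension $t+i-1$, giving $q^m-q^{t+i-1}$ admissible choices, hence $\prod_{i=0}^{r'-1}(q^m-q^{t+i})$ tuples in all. Then, for each fixed admissible $E$, I would count how many of these tuples produce exactly that $E$, namely the tuples lying inside $E$ whose classes generate $E/T$; the same recursive argument carried out inside $E$ gives $\prod_{i=0}^{r'-1}(q^r-q^{t+i})$. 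Dividing, $\mathcal{E}(T)$ is the ratio of these two products, and after factoring $q^m-q^{t+i}=q^{t+i}(q^{m-t-i}-1)$ and $q^r-q^{t+i}=q^{t+i}(q^{r'-i}-1)$ the powers $q^{t+i}$ cancel term by term; reindexing the denominator by $j=r'-i$ rewrites $\prod_{i=0}^{r'-1}(q^{r'-i}-1)=\prod_{i=0}^{r'-1}(q^{i+1}-1)$, and the ratio collapses to the stated product.

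The one point that requires care in the direct approach — and the place where a naive computation goes astray — is the overcounting factor: the vectors producing a given $E$ range over all of $E$, a space of dimension $r$, so this factor is $\prod_{i=0}^{r'-1}(q^r-q^{t+i})$ and \emph{not} the number $\prod_{i=0}^{r'-1}(q^{r'}-q^{i})$ of ordered bases of the $r'$-dimensional quotient $E/T$. Using the latter (i.e. forgetting that the $v_i$ live in $E$ rather than in a complement of dimension $r'$) leaves a spurious factor $q^{r't}$ and yields the wrong value. For this reason I would present the quotient argument as the main proof, since there the two auxiliary counts genuinely take place in $GF(q^m)/T$ and the cancellation is automatic, and relegate the explicit factorisation to a one-line verification.
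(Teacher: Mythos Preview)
Your proof is correct. Both of your approaches are valid, and the quotient argument in particular is cleaner than what the paper does.

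The paper's own proof is in the same spirit as your second (direct) approach: it builds $E$ from $T$ one dimension at a time, arguing that the number of $(t{+}1)$-dimensional spaces containing $T$ is $(q^m-q^t)/(q^{t+1}-q^t)=(q^{m-t}-1)/(q-1)$, and then says ``a repetition of this approach $r'-1$ times gives the formula.'' In effect the paper is counting maximal chains $T=T_0\subset T_1\subset\cdots\subset T_{r'}=E$ and (implicitly) dividing by the number of such chains inside a fixed $E$, which is exactly your tuple count divided by your overcounting factor after one reindexing. The paper's write-up is quite terse and even contains a confusing sentence (``any subspace of dimension $t+1$ contains $q^{t+1}-1$ subspaces of dimension $t$ containing $T$''), so your version of the direct argument is more transparent. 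Your first approach, identifying $\mathcal{E}(T)$ with the Gaussian binomial $\binom{m-t}{r'}_q$ via the quotient $GF(q^m)/T$, is not in the paper; it is shorter, avoids the overcounting bookkeeping you flag in your final paragraph, and makes the dependence on $m-t$ and $r'$ (and the independence of the particular $T$) immediate.
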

\begin{proof}
Consider the case where $r=t+1$, we need to construct distinct subspaces of dimension
$t+1$ containing $T$. This can be done by adjoining an element of $GF(q^m)$ modulo the subspace
$T$, which gives $(q^m-q^t)/(q^{t+1}-q^t)=(q^{m-t}-1)/(q-1)$
possibilities. Now any subspace of dimension $t+1$
contains $q^{t+1}-1$ supspaces of dimension $t$ containing $T$. A repetition
of this approach $r'-1$ times gives the formula. (see also \cite{McWSlo77} p.630).
\end{proof}

\begin{theorem}\label{thm:density}
The number ${\mathcal T}(t,r,d,m)$ of $T$-decodable syndromes
satisfies the upper bound:
$${\mathcal T}(t,r,d,m)\leq \mathcal{E}(T)q^{rd(n-k)}.$$
Furthermore, under the conditions
$r(2d-1) \le m$ and 
\begin{align}
  \label{eq:FT}
  \dim\langle FT\rangle &= \dim F\dim T,\\
  \label{eqa:2dimF-1}
  \dim(F_1^{-1}F + F_2^{-1}F) &=
2\dim F -1 = 2d-1,
\end{align}
we also have the lower bound:
 $$\left(1-\frac{1}{q-1}\right)^2\mathcal{E}(T)q^{rd(n-k)}\leq {\mathcal T}(t,r,d,m).$$
\end{theorem}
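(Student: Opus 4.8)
The plan is to reorganise the count of $T$-decodable syndromes according to the support space $E$ they force. The decoding analysis already shows that a $T$-decodable $s$ determines its support uniquely: the space $S=\langle\langle FT\rangle,s_1,\dots,s_{n-k}\rangle$ is read off from $s$, condition (iii) says $S=\langle FE\rangle$, and condition (ii) recovers $E=F_1^{-1}S\cap F_2^{-1}S$. Writing $\mathcal V$ for the set of dimension-$r$ subspaces $E\supseteq T$ satisfying (i) and (ii), and $N(E)$ for the number of $s$ satisfying (iii) with this $E$, I therefore get the exact identity $\mathcal T(t,r,d,m)=\sum_{E\in\mathcal V}N(E)$. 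First I would note that (iii) forces $\langle FT\rangle\subseteq\langle FE\rangle$, whence $T\subseteq F_1^{-1}\langle FE\rangle\cap F_2^{-1}\langle FE\rangle=E$ by (ii); so every admissible $E$ contains $T$, there are at most $\mathcal E(T)$ of them, and since each coordinate of an admissible $s$ lies in $\langle FE\rangle$, a space of dimension $rd$ by (i), we have $N(E)\le q^{rd(n-k)}$. The upper bound $\mathcal T\le\mathcal E(T)q^{rd(n-k)}$ falls out immediately.

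For the lower bound I would estimate $N(E)$ and $|\mathcal V|$ separately, one per factor. To count $N(E)$, fix $E\in\mathcal V$ and set $W=\langle FE\rangle$ (dimension $rd$) and $U=\langle FT\rangle$ (dimension $dt$ by \eqref{eq:FT}); condition (iii) asks exactly that the images of $s_1,\dots,s_{n-k}$ span the quotient $W/U$, of dimension $dr'$. The number of spanning $(n-k)$-tuples in a $dr'$-dimensional $\f_q$-space is $\prod_{j=0}^{dr'-1}(q^{n-k}-q^j)$, and each lifts to $W^{n-k}$ in $q^{dt(n-k)}$ ways, so $N(E)=q^{dt(n-k)}\prod_{j=0}^{dr'-1}(q^{n-k}-q^j)$. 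Dividing by $q^{rd(n-k)}=q^{dt(n-k)}(q^{n-k})^{dr'}$ gives $N(E)/q^{rd(n-k)}=\prod_{j=0}^{dr'-1}(1-q^{j-(n-k)})\ge 1-\tfrac1{q-1}$, where the last step uses $dr'\le n-k$ (valid since $r'\le(n-k)/d$). This yields the first factor uniformly over $E\in\mathcal V$.

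The second factor comes from $|\mathcal V|\ge(1-\tfrac1{q-1})\mathcal E(T)$. Setting $G=F_1^{-1}F+F_2^{-1}F$, so that $\dim G=2d-1$ by \eqref{eqa:2dimF-1} and $F_1^{-1}\langle FE\rangle+F_2^{-1}\langle FE\rangle=\langle GE\rangle$, I would first observe that, granting (i), condition (ii) is equivalent to the product space being as large as possible, namely $\dim\langle GE\rangle=(2d-1)r$; thus $\mathcal V$ consists of the $E\supseteq T$ for which both $\langle FE\rangle$ and $\langle GE\rangle$ attain maximal dimension, a hereditary property. I would count these by building $E$ from $T$ one dimension at a time: from a good partial space $E_i$ of dimension $t+i$, a new vector $x$ keeps both products maximal unless $xF$ meets $\langle FE_i\rangle$ or $xG$ meets $\langle GE_i\rangle$, i.e. unless $x$ lies in one of the subspaces $f^{-1}\langle FE_i\rangle$ ($f\in F\setminus\{0\}$) or $g^{-1}\langle GE_i\rangle$ ($g\in G\setminus\{0\}$). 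A union bound caps the bad $x$ by $\tfrac{q^d-1}{q-1}q^{d(t+i)}+\tfrac{q^{2d-1}-1}{q-1}q^{(2d-1)(t+i)}$ out of $q^m-q^{t+i}$; because these sizes depend only on $i$, a uniform per-step lower bound factorises through the flag count, giving $|\mathcal V|/\mathcal E(T)\ge\prod_{i=0}^{r'-1}(1-a_i)$ with $a_i$ the bad fraction at step $i$. The hypotheses \eqref{eq:FT} and \eqref{eqa:2dimF-1} (together with the genericity of $T$, which is what actually guarantees $\dim\langle GT\rangle=(2d-1)t$ so that the induction can start) make the base configuration non-degenerate, and $r(2d-1)\le m$ is what makes every $a_i$ small.

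The hard part will be the final estimate $\sum_{i=0}^{r'-1}a_i\le\tfrac1{q-1}$. The dominant contribution is $\tfrac{q^{(2d-1)(t+i+1)}}{(q-1)(q^m-q^{t+i})}$, a geometric progression in $i$ with ratio $q^{2d-1}$ whose largest term, at $i=r'-1$, is essentially $q^{(2d-1)r-m}/(q-1)\le 1/(q-1)$ precisely because $r(2d-1)\le m$; summing the progression cleanly and showing that the smaller ($F$ rather than $G$) contribution does not spoil the bound is the one genuinely delicate computation, and the place where one must also confirm the $T$-side non-degeneracy noted above. Once $\sum a_i\le\tfrac1{q-1}$ is in hand, $\prod(1-a_i)\ge 1-\sum a_i$ gives $|\mathcal V|\ge(1-\tfrac1{q-1})\mathcal E(T)$, and combining with the bound on $N(E)$ in the identity $\mathcal T=\sum_{E\in\mathcal V}N(E)$ produces $\mathcal T\ge(1-\tfrac1{q-1})^2\mathcal E(T)q^{rd(n-k)}$, as required.
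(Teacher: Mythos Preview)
Your argument is essentially the paper's: the same partition $\mathcal T=\sum_{E\in\mathcal V}N(E)$, the same product bound for $N(E)$ (the paper writes it as $\prod_{i\ge 1}(1-q^{-i})\ge 1-\tfrac1{q-1}$ rather than your exact count, but the outcome is identical), and the same one-vector-at-a-time extension argument for $|\mathcal V|$, which the paper isolates as a standalone lemma. Two remarks that streamline your ``delicate computation'': first, since $\dim\langle GE\rangle=(2d-1)r$ already forces $\dim\langle FE\rangle=dr$ (from $\dim\langle GE\rangle\le 2\dim\langle FE\rangle-\dim E$), your separate $F$-term in the bad-set count is redundant and can be dropped; second, with only the $G$-term left, the paper's trick is to work with $b$ uniform in $GF(q^m)$ (denominator $q^m$, not $q^m-q^{t+i}$), which makes the step-$i$ bound exactly $\tfrac{q^{(2d-1)(t+i+1)}-q^{(2d-1)(t+i)}}{(q-1)q^m}$ and telescopes to $\tfrac{q^{(2d-1)r}}{(q-1)q^m}\le\tfrac1{q-1}$. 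Your flag that the induction needs $\dim\langle GT\rangle=(2d-1)t$, not merely $\dim\langle FT\rangle=dt$, is correct and is a hypothesis the paper uses implicitly when invoking its lemma without stating it in the theorem.
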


Note that condition \eqref{eqa:2dimF-1} depends only on the subspace
$F$ and can be ensured quite easily when designing the matrix
$H$. Random spaces $F$ with random elements $F_1$ and $F_2$ will
typically have this property. Condition \eqref{eq:FT} depends on the
choice of the subspace $T$: as will be apparent from Lemma below, 
for a random subspace $T$ condition \eqref{eq:FT} holds with probability very
close to $1$.

The complete proof of Theorem \ref{thm:density} is given in Appendix
\ref{app:density}.

\pagebreak

{\bf Remarks:}
\begin{enumerate}
\item
It can be shown with a finer analysis that the term $(1-1(q-1))^2$ in the lower
bound can be improved to a quantity close to $1-1(q-1)$.
\item For large $q$, Theorem \ref{thm:density} shows that, for most
  choices of $T$, the density
  of $T$-decodable syndromes essentially equals 
  \begin{equation}
    \label{eq:density}
    \frac{\E(T)q^{rd(n-k)}}{q^{m(n-k)}} \approx
    q^{(r-t)(m-r)+(n-k)(rd-m)}.
  \end{equation}
\end{enumerate}

Remarkably, it is possible to choose sets of parameters $(m,t,r,d)$,
with $(n-k)=d(r-t)$, such that the exponent in \eqref{eq:density}
equals zero, which gives a density very close to $1$.

{\bf Example of parameters with density almost $1$:}  For $q=2^{8},m=18,n=16,k=8,t=2,r'=4$, the algorithm
decodes up to $r=t+r'=6$ for a fixed random partial support $T$ of dimension $2$. 
The GVR bound for a random $[16,8]$ code with $m=18$ is $5$, the Singleton bound is $8$,
we see that the decoding radius $6$ is therefore just above the GVR bound at $5$ and smaller 
than the Singleton bound at $8$. Moreover one can notice that
if parameters $(m,t,r,d)$ satisfy the two equations $(r-t)(m-r)+(n-k)(rd-m)=0$
and $(n-k)=d(r-t)$ (the case for which the density is almost $1$), then
for any integer $\alpha$ greater than $1$, the parameter set $(\alpha m,\alpha t,\alpha r,d)$
satisfies the same  equations, and hence for a given $d$ one obtains 
an infinite family of parameters with density almost $1$.

{\bf Decoding in practice.}
In practice it easy enough to find sets of parameters for which the density of decodable
syndromes is very close to $1$, i.e. such that
$(r-t)(m-r)+(n-k)(rd-m)=0$. 

\section{RankSign, a signature scheme for the rank metric based on augmented LRPC codes}
\label{sec:RankSign}
We saw in the previous section how to construct a matrix $H$ of an
LRPC code, with
a unique support decoding, which opens the way for a signature algorithm.
In practice the best decoding results are obtained for $d=2$: the
natural strategy is to define for
the public key a matrix $H'=AHP$, where
$A$ is a random $(n-k) \times (n-k)$
invertible matrix in the extension
field and $P$ is an invertible $n \times n$ matrix in the small field.
However,
it is easily possible for a cryptanalyst to recover the words of small
weight $d=2$ in $H'$ and
it is therefore necessary to hide the matrix $H$ in another way. In
what follows we present 
a simple type of masking: $RankSign$ which consists in adding a few random columns to $H$.
% other more complex types of masking are  also possible: $RankSign^{\times}$ (RankSign-multiply) and 
% $RankSign^{+\times}$, which are presented in Appendix D.

Suppose one has a fixed support $T$ of dimension $t$.
We consider the public matrix $H'=A(R|H)P$ with $R$ a random $(n-k) \times t'$ matrix in $GF(q^m)$.
We will typically take $t'=t$ but one could envisage other values of $t'$. 
We call {\it augmented LRPC codes} such codes with
parity-check matrices $H'=A(R|H)P$.

Starting from a partial support $T$ that has been randomly chosen and
is then fixed,
the signature consists in decoding not a random $s$ but the syndrome
$s'=s-R.(e_1,\cdots,e_t)^t$ for $e_i$ random independent elements of $T$. 

The overall rank of the solution vector $e$ is still $r=t+r'$.
the masking gives us that the minimum rank-weight of the code
generated by the rows of $H'$ is $t+d$ 
rather than purely $d$: therefore recovering the hidden structure
involves finding relatively large minimum weight vectors in a code.
In practice we consider
$d=2$ and $H$ is a $n/2 \times n$ matrix with all coordinates in a
space $F$ of dimension $2$.
Moreover for $\{F_1,F_2\}$  a basis of $F$, we choose the matrix $H$ such
that when $H$ is written in the basis  $\{F_1,F_2\}$, one obtains a $n \times n$ 
invertible matrix (of maximal rank) over $GF(q)$. It can be done
easily. Figure~\ref{fig:Signature} describes
the scheme, where || denotes concatenation. 

\begin{figure}[!h]
\centering
\boxed{
  \begin{minipage}{12.5cm}
\begin{enumerate}
\item {\bf Secret key:} an augmented LRPC code over $GF(q^m)$ 
with parity-check matrix $(R|H)$ of size $(n-k) \times (n+t)$
                     which can decode $r'$ errors and $t$ generalized
                     erasures: a randomly chosen
$(n-k) \times (n-k)$ matrix $A$ that is invertible in $GF(q^m)$  a
randomly chosen $(n+t) \times (n+t)$
matrix $P$ invertible in $GF(q)$.  

\item {\bf Public key:} the matrix $H'=A(R|H)P$, a small integer value $l$, a hash function $hash$.

\item {\bf Signature of a message $M$:}  

     a) {\it initialization}: seed $\leftarrow$ $\{0,1\}^l$, pick $t$ random independent elements $(e_1,\cdots,e_t)$ of $GF(q^m)$

     b) {\it syndrome}: $s \leftarrow hash(M||seed) \in GF(q^m)^{n-k}$      
   
     c) decode by the LRPC matrix $H$, 
the syndrome $s'=A^{-1}.s^T-R.(e_1,\cdots,e_t)^T$ with erasure space 
$T=\langle e_1,\cdots,e_t\rangle$ and $r'$ errors by Algorithm 1.

d) if the decoding algorithm works and returns a word $(e_{t+1},\cdots,e_{n+t})$ of weight $r=t+r'$,
signature=$((e_1,\cdots,e_{n+t}).(P^T)^{-1}, seed)$, else return to a).

\item {\bf Verification}: Verify that $Rank(e)=r=t+r'$ and $H'.e^T=s=hash(M||seed)$.
 
\end{enumerate}
  \end{minipage}
} \caption{The $RankSign$ signature algorithm}
\label{fig:Signature}
\end{figure}

{\bf Parameters:}  {\it Public key size:} $(k+t)(n-k)mLog_2(q)$  {\it Signature size:} $(m+n+t)rLog_2(q)$.

The cost of the decoding algorithm is quadratic because of preprocessing
of $H_f^{-1}$, hence the major cost comes from the linear algebra over the large
field $GF(q^m)$. 

{\it Signature complexity:} $(n-k) \times (n+t)$ operations in $GF(q^m)$. {\it Verification complexity:} $(n-k) \times (n+t)$ operations in $GF(q^m)$.

The length $l$ of the seed can be taken equal to $\frac{80}{Log_2(q)}$ for instance.

\section{Security analysis of the scheme }

\subsection{Security of augmented LRPC codes}\label{sec:security}

In the previous section we defined augmented-LRPC with dual matrix
 $H'=A(R|H)P$, we now formulate the problem Ind-LRPC codes (Ind-LRPC) on the security of these
codes:

\medskip
{\bf Problem [Ind-LRPC] {\it The augmented LRPC codes are indistinguishable from random codes.}
}
\smallskip

We know make the following assumption that we will
discuss below:

\smallskip

{\bf Assumption:}{ \it the Ind-LRPC problem is difficult.}

\smallskip

{\bf {\it Discussion on the assumption:}} The family of augmented LRPC codes is not of course 
a family of random codes, but they are weakly structured codes: the main point being
that they have a parity-check matrix one part of which consists only
of low rank coordinates the other part consisting of
random entries.
The attacker never has direct access to the LRPC matrix $H$, which is hidden
 by the augmented part.

The minimum weight of augmented LRPC codes is smaller than the GVR bound,
hence natural attacks consist in trying to use their special structure
to attack them. There exist general attacks for recovering the minimum weight of a code
(see Section \ref{sec:rankcodes}) but these attacks have a fast increasing complexity especially
when the size of the base field $GF(q)$ increases.
We first list obvious classical attacks for recovering the structure of augmented-LRPC codes
and then describe specific attacks.

$\bullet$ {\it Previously known structural attacks for rank codes.} The main structural attack 
for the rank metric is the Overbeck attack on the GPT cryptosystem.
The attack consists in considering concatenated public matrices $G^q,G^{q^2},...,G^{q^{n-k-1}}$,
in that case the particular structure of Gabidulin codes enables one to find a concatenated
matrix with a rank default; this is due to the particular structure of the Gabidulin codes
and the fact that for Gabidulin codes $G^{q^i}$ is very close to $G^{q^{i+1}}$.
In the case of LRPC codes, since the rows are taken randomly in a small space, this
attack makes no sense, and cannot be generalized. 

\medskip

$\bullet$ {\it Dual attack: attack on the dual matrix $H'$.} Another
approach consists in directly finding words of small weight induced
by the structure of the code, from which one can hope to recover the global 
structure. 
For augmented LRPC codes, the rank of the small weight
words is $d+t$: $d$ for LRPC and $t$ for the masking. This attack becomes very hard
when $t$ increases, even for low $t$. For instance for $t=2$ and $d=2$ it gives a minimum weight
of $4$, which for most parameters $n$ and $k$ is already out of reach
of the best known attacks
on the rank syndrome decoding problem (see Section \ref{sec:rankcodes}).   

\medskip

$\bullet$ {\it Attack on the isometry matrix P.} Remember that for rank metric codes,
the isometry matrix {\it is not a permutation matrix} but an invertible matrix over the 
base field $GF(q)$. The attacker can then try to guess
the action of $P$ on $H$, since $d$ is usually small
negating this action may allow to attack directly a code of rank $d$.
Since $d$ is small it is enough to guess the resulting action of $P$ on $n-k+3$ 
columns by considering only the action of $P$ coming from the first $t$ columns 
of the matrix R -- the only columns which may increase the rank -- it means
guessing $(n-k+3)\times t$ elements of $GF(q)$ (since coordinates of $P$ are in $GF(q)$), 
hence a complexity of $q^{(n-k+3)t}$.
In general this attack is not efficient as soon as $q$ is not small (for instance $q=256$).

\medskip

$\bullet$ {\it Attack on recovering the support.} An attacker may also 
try to recover directly an element of the support. For instance
in the case of $d=2$, for $F$ the error support generated by $\{F_1,F_2\}$,
up to a constant one can rewrite $F$ as generated by $1$ and $F_2.F_1^{-1}$.
Then the attacker can try to guess the particular element $F_2.F_1^{-1}$,
recover $F$ and solve a linear system in the coordinates of the elements 
of $H$. The complexity of this attack is therefore $q^m.(nd)^3$. Even in the most favourable
case when $d=2$ this attack
is exponential and becomes infeasible for $q$ not too small.

\medskip

$\bullet$ {\it Differential support attack.}
It is also possible to search for an attack directly based on the specific structure
of the augmented LRPC codes.
The general idea of the differential support attack  is to consider the vector space 
$V$ on the base field $GF(q)$ generated by the elements of a row of the augmented matrix $H'$
and to find a couple $(x,x')$ of elements of $V$ such that $\frac{x'}{x} \in F$ the support
of the LRPC code. The complexity of the attack is at least $q^{(n-k)(d-1)+t}$,
the detail of the attack can be found in Appendix~\ref{app:differential}. In practice this exponential attack is often
the best attack for recovering the structure of the code and distinguishing the augmented
LRPC code from a random code.

\medskip

{\bf Conclusion on the hardness of the Ind-LRPC problem}

Even though there are many possible strategies for attacking the Ind-LRPC problem, in particular 
because of the rich structure of rank metric, the above discussion of general known
attacks shows that they are all exponential with a strong dependency
on the size of $q$. Moreover, we also considered very specific attacks (like
the differential support attack) related
to the particular structure of the augmented LRPC codes. 
This analysis seems to show that the Ind-LRPC 
problem is indeed difficult, with all known attacks being exponential. 
In practice  it is easy to find parameters which resist all these attacks.
 
%{\bf Remark}
%There is a long history of broken cryptosystems related to hiding structured
%matrices. One can though remark the following: 1) there are strongly
%structured families of hidden codes like the Goppa-McEliece scheme which have not been attacked, 2) the families of hidden
%structures which have been attacked rely usually on *{\bf very}* strongly structured codes like the Reed-Solomon codes or the Gabidulin codes \cite{FOPT10}, which are very difficult to hide.
%Moreover there exist hidden families of codes or lattices with a low structure which have
%never been really attacked like for instance the NTRU lattices or the double circulant codes.
%It seems that the weak structure of the LRPC codes place them in this latter category.

\subsection{Information leakage}
The attacks considered above concerned
the case where no additional information
was known beside the public parameters.
Often the most efficient attacks on signatures is to recover the hidden structure
of the public key by using information leaking from real signatures. This for instance
is what happened in the case of NTRUSign: the secret key is not directly attacked,
but the information leaked from real signatures enables one to recover successfully
the hidden structure. We show below that
with our masking scheme no such phenomenon can occur,
since we prove that, if an attacker can break the signature scheme
for public augmented matrices with the help of information leaking
from a number of (approximately) $q$ real signatures, then he can also
break the scheme just as efficiently {\it *without*} any authentic
signatures.

Theorem~\ref{thm:forgery} below states the unleakibility of signatures.
It essentially states that valid signatures leak no information on
the secret key. More precisely, there
exists a polynomial time probabilistic algorithm that takes as input the
public matrix $H'$ and  produces couples
$(m,\sigma)$,
where $m$ is a message and $\sigma$ a valid signature
for $m$ and that, under the random oracle model, has the
same probability distribution as couples (message, signature)
output by the
authentic signature algorithm, and is therefore indistinguishable from
them. Therefore, whatever forgery can be
achieved from the knowledge of $H'$ and a list of valid signed
messages, can be simulated and reproduced with the public matrix
$H'$ as only input.

\begin{theorem}:\label{thm:forgery}
 For any algorithm $\A$ that leads to a forged signature using $N\leq q/2$
 authentic signatures, there is an algorithm $\A'$ with the same complexity
 that leads to a forgery using only the public key as input and
 without any authentic signatures.
\end{theorem}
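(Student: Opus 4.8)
The plan is to exhibit a simulator $\A'$ that, given only the public matrix $H'$, produces (message, signature) pairs whose joint distribution is statistically indistinguishable from those produced by the genuine signing algorithm in the random oracle model. Once this is established, the theorem follows immediately: any forging algorithm $\A$ that consumes $N\le q/2$ authentic signatures can be fed simulated signatures instead, yielding an algorithm $\A'$ of essentially the same complexity that needs only $H'$ as input. The heart of the argument is therefore the construction of the simulator and the proof that it matches the authentic distribution.

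First I would describe what the authentic signature distribution actually is. A genuine signature is a pair $((e_1,\dots,e_{n+t}).(P^T)^{-1},\mathit{seed})$, where $\mathit{seed}$ is uniform in $\{0,1\}^l$ and $e=(e_1,\dots,e_{n+t})$ is a rank $r=t+r'$ vector satisfying $H'.e^T=\mathit{hash}(M\|\mathit{seed})$. In the random oracle model the value $\mathit{hash}(M\|\mathit{seed})$ is a uniformly random element $s\in GF(q^m)^{n-k}$, so intuitively an authentic signature is (close to) a uniformly random rank $r$ solution $e$ of $H'.e^T=s$ for a uniformly random target $s$. The simulator $\A'$ should therefore run as follows: pick $M$ and $\mathit{seed}$, sample a vector $e$ of rank exactly $r$ directly (for instance by choosing a uniformly random $r$-dimensional support $E\subset GF(q^m)$ and random coordinates in $E^{\,n+t}$, conditioned on rank exactly $r$), compute $s=H'.e^T$, and then \emph{program} the random oracle by setting $\mathit{hash}(M\|\mathit{seed}):=s$. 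The output pair $(M,\sigma)$ with $\sigma=(e.(P^T)^{-1},\mathit{seed})$ is then a valid signature by construction, and $\A'$ never touches the secret data $A,P,R,H$ or the erasure space $T$.

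The key step is to show that this simulated distribution coincides with (or is statistically close to) the authentic one. The authentic algorithm, for a uniform syndrome $s$, selects a random erasure $T$ and outputs the \emph{unique} decoded $e$ of support $E\supset T$ guaranteed by the unique-decoding analysis; by Theorem~\ref{thm:density} the $T$-decodable syndromes have density close to $1$, so conditioning on successful decoding perturbs the distribution only negligibly. What must be checked is that, as $s$ ranges uniformly and $T$ ranges uniformly, the resulting $e$ is (nearly) uniform over rank $r$ vectors. This is where the surjectivity and one-to-one property of the mapping \eqref{eq:mappingEn} (the case $nr=(n-k)rd$) is essential: for a fixed support $E$ the map $e\mapsto H.e^T$ is a bijection $E^n\to\langle FE\rangle^{n-k}$, so syndromes and supported error vectors are in exact correspondence, and counting via $\mathcal{E}(T)$ shows the induced distribution on $e$ is flat up to the $(1-1/(q-1))^2$ factors of Theorem~\ref{thm:density}. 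The masking $A(R|H)P$ is an invertible change of variables and hence distribution-preserving. The bound $N\le q/2$ enters because each simulated signature programs one oracle point; with at most $q/2$ queries the probability that two freshly sampled $\mathit{seed}$ values or oracle inputs collide, forcing a reprogramming inconsistency, stays below a constant, so the simulation remains faithful.

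The main obstacle I expect is the faithfulness of the distribution, not its mere validity: proving that the simulator's output is genuinely indistinguishable requires controlling the small multiplicative deviations (the $1-1/(q-1)$ factors, the negligible fraction of non-$T$-decodable syndromes, and the dependence of the authentic $e$ on the hidden $T$) and assembling them into a total-variation bound that is negligible in the security parameter. In particular one must verify that averaging the authentic decoder over a uniform erasure $T$ really does wash out all dependence on $T$ and reproduce the support-uniform distribution that $\A'$ samples directly; this is the technical crux, and it is exactly where the counting lemmas for $\mathcal{E}(T)$ and the exactness of \eqref{eq:mappingEn} must be invoked carefully. I would defer the detailed total-variation estimate, as the paper does for Theorem~\ref{thm:density}, to an appendix.
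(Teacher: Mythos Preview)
Your simulator construction is essentially the paper's: sample a uniform rank-$r$ vector $e$, set $s=H'e^T$, and program the random oracle. The reduction via the bijection $(x',y')\mapsto (Px',A^{-1}y')$ and the use of the one-to-one property of the map $E^n\to\langle FE\rangle^{n-k}$ are also the right ingredients. But there is a concrete gap in your argument for why the bound $N\le q/2$ appears.

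You attribute the bound to seed collisions in oracle programming. That is not the mechanism. In the paper's proof, the authentic signer \emph{never} outputs a pair $(x_H,u)$ that fails to be $T$-decodable (Definition~\ref{def:Tdecodable}): when decoding fails he retries with a fresh seed, so his output is uniform over $T$-decodable couples. The simulator, by contrast, draws a uniform rank-$r$ vector, implicitly defining $(\tau,x_H)$ via $P$ and hence a space $T$; with probability $\pi$ the resulting $u=Hx_H^T$ is \emph{not} $T$-decodable (one of conditions (i)--(iii) fails), and in that case the simulated signature lies outside the support of the authentic distribution. The simulator cannot detect this without the secret key. The paper bounds $\pi\le 2/(q-1)\approx 2/q$ (one $1/(q-1)$ for the failure of (i)--(ii) via Lemma~\ref{lem:AB}, one $1/(q-1)$ for the failure of (iii)), so over $N$ simulated signatures the chance that at least one is bad is at most $N\pi$, and $N\le q/2$ keeps this below a constant. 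Seed-collision probability, by contrast, is governed by $2^l$ and has nothing to do with $q$.

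Relatedly, your phrasing ``the resulting $e$ is (nearly) uniform over rank $r$ vectors'' inverts the direction of the argument. The cleaner route, and the one the paper takes, is to show that both the authentic and simulated distributions, \emph{conditioned on landing in the $T$-decodable set}, are exactly uniform there; the statistical distance then comes entirely from the simulator's $\pi$-probability of landing outside that set. Trying instead to argue that the authentic output is uniform over all rank-$r$ vectors would require you to control how the retry loop biases the choice of $T$, which is unnecessary.
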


\proof see Appendix~\ref{app:leakage}.
\qed

\subsection{Unforgeability}

Our main Theorem~\ref{thm:forgery} and its proof, show that it is possible to simulate (message,signature) couples with the same probability distribution as valid (message,signature) couples whenever the number of such couples is less than $q/2$. Therefore, given less than $q/2$ signatures (chosen or given), an attacker cannot do better than an attacker who knows only the public key (the matrix of a code). And in that case, under the Ind-LRPC indistinguashability assumption of augmented LRPC codes with random codes, it implies that forging a false signature in the ROM (i.e. being able to approximate a random syndrome for the augmented LRPC class of
 codes) means being able to decode a random rank code. Parameters of the scheme are hence chosen with large $q$ and suitable code parameters for which it is difficult to decode a random code and to distinguish augmented LRPC codes from random codes.

% There are two different ways to obtain a forgery, the first approach consists
% in trying to obtain  a direct forgery by 
% searching for a low weight words corresponding to the hashed syndrome
% value. From the assumption that the augmented LRPC codes are indistinguishable 
% from random codes, the complexity of such a problem as explained in section 2.3 
% is difficult and best known attack are exponential.
% It is also possible to attack directly 
% the structure of the augmented LRPC code, but we described in the previous section
% possible attacks using this particular strucure and explained how it was
% difficult even with a special focus on the particular structure of augmented-lRPC codes.
% At last it is also possible to use extra-information like known signature to try to attack the structure of the code, but we prove in Theorem 2 that it was possible to choose parameters
% with a small probability of leaking in $1/q$, for instance with large $q$.
% If such an attack using information leaking would exist, then it could be
% done directly on the public key without additional information.
% Overall since all known attacks on our scheme (direct forgery and Ind-LRPC) have all 
% exponential complexities it is possible to obtain parameters for the scheme. 

\section{Practical security and parameters}
Below we give in Table~\ref{tab:parameters} some examples of parameters. The parameters
are adjusted to resist all previously known attacks. 
The security reduction  holds for up to $q/2$ 
signatures, hence if one considers $q=2^{40}$ it means we are
protected against leakage
for up to $2^{40}$ obtained authentic signatures. Such an amount of signatures
is very difficult to obtain in real life, moreover if one multiplies
by the amount of time necessary to obtain a signature (about $2^{30}$
for $q=2^{40}$) we clearly see that
obtaining such a number of authentic signatures is out of reach, and it 
justifies our security reduction.

We also give parameters for $q$ lower than $2^{40}$: in that case the reduction
is weaker in the sense that it does not exclude a leaking attack
for sufficiently many signatures. However, such a leaking attack 
seems difficult to obtain anyway, and these parameters can be seen as challenges
for our system. 

In the table the considered codes are $[n+t,k+t]$ codes which give a signature
of rank~$r$. The dual code $H'$ is a $[n+t,n-k]$ code which contains words of rank
$d+t$. In the table `LP' stands for the logarithmic complexity of the algebraic Levy-Perret
attack, for instance in the case $n=16$, one gets a $[18,8]$ code in which one searches
for words of rank $4$, it gives $270$ quadratic equations for 126 unknowns, with a theoretical complexity
of $2^{120}$ from~\cite{bettale} (remember that for a random quadratic system over $GF(2)$
with $n$ unknowns and $2n$ equations the complexity is roughly $2^n$ operations
in the base field $GF(2)$). The complexity of a direct attack for searching
low weight words of weight $d+t$ with combinatorial attacks (see Section~\ref{sec:rankcodes})
is given in `Dual'. Finally, `DS' stands for the differential support attack 
of Section~\ref{sec:security} and `DA' stands for the direct attack on the signature in which one searches
directly for a forgery 
for a word of weight $r$ in a $[n+t,k+t]$ code. In the table the number of augmented
columns is usually $t$ except for the last example for which one adds $2$ columns
rather than $t=5$.

The analysis of the security complexities shows that the best attack (in bold in the table)
depends on the given parameters: when $q$ is large the algebraic attacks are better
since they do not really depends on $q$, when $d$ increases the decoding algorithm
is less efficient and then one get closer to the Singleton bound and direct
forgery for the signature becomes easier. For other parameters, usually the specific 
structural differential support attack DS is better. 

\begin{table}
\begin{center}
{\scriptsize
\begin{tabular}{|c|c|c|c|c|c|c|c|c|c|c|c|c|c|c|c|}
\hline
n & \hspace{-2mm} n-k \hspace{-2mm} & m & q & d& t & r' & r & \hspace{-2mm} GVR \hspace{-2mm} & \hspace{-2mm} Singleton \hspace{-2mm} & \hspace{-2mm} pk(bits) \hspace{-2mm} & \hspace{-2mm} sign(bits) \hspace{-2mm} & LP & Dual & DS & DA  \\
\hline
16 & 8  & 18 & $2^{40}$ & 2 &2  &  4 & 6 & 5 & 8 & 57600 & 8640 & {\bf 130} &1096  & 400  &776 \\
\hline
16 & 8  & 18 & $2^8$ & 2  &2& 4 & 6  &5 & 8 &11520 & 1728 & 110 & 233 & {\bf 80} &168 \\
\hline
16 & 8  & 18 & $2^{16}$ & 2&2  & 4 & 6 & 5  & 8 &23040 & 3456 & {\bf 120} & 448 & 160 & 320 \\
\hline
20 & 10 & 24 & $2^8$ & 2 &  3 &  5 & 8 & 6 & 10 & 24960 & 3008 & 190 &370 & {\bf 104} &226 \\
\hline 
27 & 9  & 20 & $2^{6}$ & 3 & 2 & 3 & 5 & 4 & 7 & 23328 & 1470 & 170 &187 &{\bf 120} &129 \\
\hline
48 & 12  & 40 & $2^4$ & 4 & 5 & 3 &  8 & 6 & 10 & 78720 & 2976 & >600  &340 & 164 & {\bf 114} \\
\hline
50 & 10  & 42 & $2^4$ & 5  & 5(2)& 2&  7 & 5 & 9 &70560 & 2800 &  >600  & 240 & 180  & {\bf 104}\\
\hline

%//52 & 30  & 52 & 2 & 20  &  6 & 5 & 19  & 31 & 34320 & 2704 & 128  \\
%\hline
%//27 & 10  & 27 & 2 & 10  &  6 & 5 & 9  & 15 & 4914 & 864 & 80  \\
%\hline
\end{tabular}
}
\end{center}
\caption{Examples of parameters for the RankSign signature scheme}
\label{tab:parameters}
\end{table}

{\bf Implementation:}
We implemented our scheme in a non optimized way, the results we obtained showed
that for small $q$ the scheme was very fast, when $q$ increases, one has to consider
the cost of multiplication in $GF(q)$, however for $q=2^{8}$ or $q=2^{16}$ some
optimized implementation may reduce this cost.

\section{Conclusion}
We have introduced a new approach to devising signatures with coding theory
and in particular in the rank metric, by proposing to decode both erasures and errors
rather than errors only.  This approach enables one to return a small weight word
beyond the Gilbert-Varshamov bound rather than below. We proposed a new efficient algorithm
for decoding LRPC codes which makes this approach feasible.
We then proposed a signature scheme based on this algorithm and the full decoding
of a random syndrome beyond the Gilbert-Varshamov bound. We also showed
that it was possible to protect our system against leakage from authentic signatures.
Finally, we propose different types of parameters, some of which are decently small.
The parameters we propose compare very well to other existing signature 
schemes based on coding theory such as the CFS scheme for instance.

\bibliographystyle{plain}
%\bibliography{crypt_chen_id}

\appendix

\section{Proof of Theorem~\ref{thm:density}}\label{app:density}

We give here a complete proof of Theorem~\ref{thm:density}.
To prove the theorem  we rely on the following
lemma: 

%{\bf Lemma \ref{lem:AB}. }
\begin{lemma}\label{lem:AB}
Let $A$ be a fixed subspace of $GF(q)^m$ of dimension $\alpha$ and 
  let $T$ be a subspace of dimension $t$ (with possibly $t=0$) such
  that $\dim\langle AT\rangle = \alpha t$.
  Let $B$ be a subspace
  generated by $T$ together with $\beta$ random independent uniform
  vectors, with $\beta$ satisfying $\alpha(t+\beta)\leq m$.
 Then
  $$\prob{\dim\langle AB\rangle < \alpha(t+\beta)}\leq\frac{q^{\alpha(t+\beta)}}{(q-1)q^m}.$$
\end{lemma}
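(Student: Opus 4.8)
The plan is to build $B$ one random vector at a time and track how the dimension of the product space $\langle AB\rangle$ grows. Write $v_1,\dots,v_\beta$ for the $\beta$ random independent uniform vectors generating $B$ over $T$, set $B_j=T+\langle v_1,\dots,v_j\rangle$ and $W_j=\langle AB_j\rangle$, so that $W_0=\langle AT\rangle$ has dimension $\alpha t$ by hypothesis and $W_\beta=\langle AB\rangle$. Since multiplication by any nonzero element of $GF(q^m)$ is a $GF(q)$-linear bijection, the space $\langle A v_j\rangle = A\cdot v_j$ has dimension $\alpha$ whenever $v_j\neq 0$, and $W_j=W_{j-1}+\langle A v_j\rangle$. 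Hence at step $j$ the dimension grows by the full amount $\alpha$ precisely when $\langle A v_j\rangle\cap W_{j-1}=\{0\}$; if every step achieves this, then $\dim\langle AB\rangle=\alpha(t+\beta)$. So the event $\{\dim\langle AB\rangle<\alpha(t+\beta)\}$ is contained in the event that some step fails, and it suffices to bound, for each $j$, the probability that step $j$ is the first to fail (making these events disjoint).

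First I would condition on the event that steps $1,\dots,j-1$ all succeeded, which fixes $W_{j-1}$ as a subspace of dimension exactly $\alpha(t+j-1)$; as $v_j$ is independent of the earlier vectors it remains uniform on $GF(q^m)$. Step $j$ then fails exactly when there is a nonzero $a\in A$ with $av_j\in W_{j-1}$, i.e. when $v_j$ lies in $\bigcup_{a\in A\setminus\{0\}} a^{-1}W_{j-1}$ (the case $v_j=0$ is automatically included, since $0$ lies in every such subspace).

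The key observation, which is where the saving factor comes from, is that $W_{j-1}$ is a $GF(q)$-subspace, so $\lambda^{-1}W_{j-1}=W_{j-1}$ for every $\lambda\in GF(q)^*$, and therefore $a^{-1}W_{j-1}$ depends only on the projective line $GF(q)\,a$. Thus the union ranges over at most $(q^\alpha-1)/(q-1)$ distinct subspaces, each of cardinality $q^{\alpha(t+j-1)}$, and a union bound gives a conditional failure probability at most $\frac{q^\alpha-1}{q-1}\cdot\frac{q^{\alpha(t+j-1)}}{q^m}$. Summing over $j=1,\dots,\beta$ and evaluating the geometric series $\sum_{j=1}^{\beta}q^{\alpha(t+j-1)}=q^{\alpha t}\,\frac{q^{\alpha\beta}-1}{q^\alpha-1}$ collapses the total to $\frac{1}{q-1}\cdot\frac{q^{\alpha(t+\beta)}-q^{\alpha t}}{q^m}$, which is at most $\frac{q^{\alpha(t+\beta)}}{(q-1)q^m}$, as claimed.

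I expect the main obstacle to be precisely this projective counting step: the naive union bound over all $q^\alpha-1$ nonzero vectors of $A$ loses a factor of $q-1$ and yields only $\frac{q^{\alpha(t+\beta)}}{q^m}$, missing the stated bound. Recognizing that the scalar classes $GF(q)\,a$ all index the same subspace $a^{-1}W_{j-1}$ is what recovers the sharper constant. A minor point to check carefully is the bookkeeping of the conditioning, together with the fact that $\dim W_{j-1}=\alpha(t+j-1)\le m$ throughout, so that all intermediate product spaces stay within $GF(q^m)$; this is exactly what the hypothesis $\alpha(t+\beta)\le m$ guarantees.
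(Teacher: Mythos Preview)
Your proof is correct and follows essentially the same strategy as the paper: both arguments build $B$ one random vector at a time, bound the conditional probability of failure at each step via a projective union bound (the paper uses a set $A^P$ of projective representatives, you observe equivalently that $a^{-1}W_{j-1}$ depends only on the line $GF(q)\,a$), and then sum the resulting geometric series to obtain the telescoping bound $\frac{q^{\alpha(t+\beta)}-q^{\alpha t}}{(q-1)q^m}\le \frac{q^{\alpha(t+\beta)}}{(q-1)q^m}$.
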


\begin{proof}
  Suppose first that $B = B'+\langle b\rangle$ where $b$ is a uniformly chosen
  random element of $GF(q)^m$ and where $B'\supset T$ is a fixed space such that
  $\dim\langle AB'\rangle = \alpha(t+\beta -1).$
  Let $A^P$ be a projective version of $A$, meaning that for every
  $a\neq 0$ in $A$, we have exactly one element of the set
  $$\{\lambda a, \lambda\in GF(q)^*\}$$
  in $A^P$.

  We have $\dim\langle AB\rangle < \alpha(t+\beta -1) +\alpha$ if and only
  if the subspace $bA$ has a non-zero intersection with $\langle
  AB'\rangle$, and also if and only if the set $bA^P$ has a non-zero
  intersection with $\langle AB'\rangle$.
Now,
\begin{align}
  \prob{\dim\langle AB'\rangle \cap Ab \neq\{0\}} &\leq \sum_{a\in A^P\!,\,
    a\neq 0} \prob{ab\in \langle AB'\rangle}\nonumber\\
  &= \frac{|A|-1}{q-1}\frac{q^{\alpha(t+\beta-1)}}{q^m}\nonumber\\
  &= \frac{q^{\alpha(t+\beta)}}{(q-1)q^m}-\frac{q^{\alpha(t+\beta-1)}}{(q-1)q^m}.\label{eq:AB'}
\end{align}
since for any fixed $a\neq 0$, we have that $ab$ is uniformly
distributed in $GF(q)^m$, and since the number of elements in $bA^P$
equals $(|A|-1)/(q-1)$.

Now write $$B_0=T\subset B_1=T+\langle b_1\rangle \subset B_2=T+\langle
b_1,b_2\rangle\subset\cdots ,\subset B_i=T+\langle b_1,\ldots
,b_i\rangle\subset \cdots \subset B=B_\beta$$
where $b_1\ldots ,b_\beta$ are independent uniform vectors in $GF(q)^m$.
We have that the probability 
$$\prob{\dim\langle AB\rangle < \dim A\dim
  B}$$ 
that $AB$ is not full-rank is not more than 
$$\sum_{i=1}^\beta\prob{\dim\langle AB_i\rangle <\dim A\dim
  B_{i}\; |\; \dim\langle AB_{i-1}\rangle =\dim A\dim B_{i-1}}$$
so that \eqref{eq:AB'} gives:
\begin{align*}
  \prob{\dim\langle AB\rangle < \dim A\dim B}&\leq
\frac{1}{q-1}\sum_{i=0}^{\beta-1}\left(\frac{1}{q^{m-(t+i+1)\alpha}}-
\frac{1}{q^{m-(t+i)\alpha}}\right)\\ &\leq
\frac{1}{q-1}\left(\frac{1}{q^{m-\alpha(t+\beta)}}-\frac{1}{q^{m-t\alpha}}\right)\leq 
\frac{1}{(q-1)q^{m-\alpha(t+\beta)}}. %\;\;\qed
\end{align*}
\end{proof}

We now give the proof of the theorem:

{\em Proof of Theorem \ref{thm:density}.}
To obtain a $T$-decodable syndrome, we must choose $n-k$ elements in a
space $\langle FE\rangle$ for a given space $E$ that contains $T$.
There are $\E(T)$ ways of choosing $E$, and for any given $E$
there are at most
$q^{\dim F\dim E}=q^{dr}$ ways of choosing a syndrome coordinate in
$\langle FE\rangle$. This gives the upper bound on ${\mathcal
  T}(t,r,d,m)$.

We proceed to prove the lower bound. First consider that
Lemma~\ref{lem:AB} proves that, when we randomly and uniformly choose
a subspace $E$ that contains $T$, then with probability at least
$1-1/(q-1)$,
we have:
$$\dim \langle (F_1^{-1}F + F_2^{-1}F)E\rangle = \dim(F_1^{-1}F
+ F_2^{-1}F)\dim E =(2d-1)r$$
by property \eqref{eqa:2dimF-1}. This last fact implies, that
\begin{equation}
  \label{eq:2dr-r}
  \dim (F_1^{-1}\langle FE\rangle + F_2^{-1}\langle FE\rangle) =2dr-r
\end{equation}
since clearly
$$F_1^{-1}\langle FE\rangle + F_2^{-1}\langle FE\rangle =\langle
(F_1^{-1}F + F_2^{-1}F)E\rangle.$$
Now, since we have $E\subset  F_1^{-1}F \cap F_2^{-1}F$, applying the formula
$\dim(A+B) =\dim A +\dim B -\dim A\cap B$ to \eqref{eq:2dr-r} gives us
simultaneously that:
\begin{align*}
  \dim \langle FE\rangle &   = dr\\
   F_1^{-1}F \cap F_2^{-1}F& = E.
\end{align*}
In other words, both conditions (i) and (ii) of $T$-decodability are
satisfied.
We have therefore proved that the proportion of subspaces $E$
containing $T$ that satisfy conditions (i) and (ii) is at least
$(1-1/(q-1))$.
Now let $E$ be a fixed subspace satisfying conditions (i) and (ii).
Among all $(n-k)$-tuples of elements of $\langle FE\rangle$, the
proportion of those $(n-k)$-tuples
that together with $\langle FT\rangle$ generate the whole of $\langle FE\rangle$
is at least
\begin{equation}
  \label{eq:1-1/(q-1)}
  \left(1-\frac 1q\right) \left(1-\frac 1{q^2}\right)\ldots
\left(1-\frac 1{q^i}\right)\ldots \geq 1 - \frac 1{q-1}.
\end{equation}
We have therefore just proved that given a subspace $E$ satisfying conditions
(i) and (ii), there are at least $(1-1/q)(q^{rd})^{n-k}$
$(n-k)$-tuples of $\langle FE\rangle^{n-k}$ satisfying condition
(iii).

To conclude, notice that since a $T$-decodable syndrome entirely
determines the associated subspace $E$, the set of $T$-decodable
syndromes can be partitioned into sets of $(n-k)$-tuples of $\langle
FE\rangle^{n-k}$ satisfying condition (iii) for all $E$ satisfying
conditions (i) and (ii). The two lower bounds on the number of such
$E$ and the number of $T$-decodable syndromes inside a given  $\langle
FE\rangle^{n-k}$ give the global lower bound of the Theorem.
\qed

\section{Differential support attack}\label{app:differential}

We now detail the differential support attack which uses the structure
of the augmented LRPC codes.
 The LRPC code $H$, used to build the signature, is hidden by some
 matrix $S$,$P$ and $R$. As well as any trapdoor cryptosystems, we can
 imagine a specific way to extract the code $H$ from the public key
 $H'=S.(R|H).P$. In this situation, $H$ is defined by $d$ matrices
 $H_1 \dots H_d$ of size $(n-k)\times n$ in $GF(q)$ such that
 $\Sigma_{l=1}^d H_l . F_l = H$. We will provide a specificity of $H'$
 which leads to an exponential extractor of a representation of the
 code $H$ permiting to decode and forge a signature. We give the
 complexity of this extractor and use it as an upper bound for the
 best attack in this cryptosystem.

First, notice that the code $H$ has severals representations. Indeed,
it is constructed using $H_1 \dots H_d$ and $F_1 \dots F_d$. Here we
want to choose a canonical representation to simplify the proof. For
that purpose, we search for the $n\times n$ matrix $P'$ instead of $P$
in $GF(q)$ such that $H'=S(R|Id.F_1\dots Id.F_d).P'$ with $Id$ the
identity matrix. We can find such a matrix because the parameters are
choosen such as $d(n-k)=n$ and the matrix $H_l$ have rank $(n-k)$,
with $1 \le l \le d$. We can also choose, without loss of generality,
a homogeneous form for $F_1,\dots ,F_d$ where $F_1=1$. This can be
deduced by swapping the matrices $S$ and $S.\frac{1}{F_1}$. Below we
try to extract a code $H$ of the form  $(Id|Id.F_2|\dots|Id.F_d)$.

In this paragraph we describe the vector space in $GF(q)$ generated by the element in a line of $H'$. We set $(S_{i,j})_{1 \le i,j \le n-k}$ for the coefficients of $S$ and  $(R_{i,j})_{(1 \le i \le n-k)(1 \le j \le t)}$ for the coefficients of $R$. The coefficient $(i,j)$ of the matrix $S.(R|H)$ can be expressed by :  
  \begin{itemize}
	\item  $\Sigma_{p=1}^{n-k}S_{i,p}R_{p,j}$, if $1 \le j \le t$
	\item $S_{i,j-t}$, if $t \le j \le t+n-k$ 
	\item $\dots$
	\item$S_{i,j-k-t}F_d$, if $k+t \le j \le n+t$ 
\end{itemize}
Each element of the row $i$ of the matrix $S.(R|H)$ belongs to the
$GF(q)$-vector space $V_i = \langle S_{i,1}F_1,\dots ,S_{i,n-k}F_1,\dots ,S_{i,n-k}F_d, R_1, \dots ,R_t\rangle$ with $R_1,\dots ,R_t$ some coefficients depending on $S$ and $R$. 
%Moreover the subset  $<S_{i,1}F_1,\dots , S_{i,n-k}F_1, \dots ,S_{i,n-k}F_d>$ belong to $E_i$ . 
Eventually, the multiplication by the matrix $P$ on the right does not change that each element of the $i$-th row of $H'$ belongs to the vector space $V_i$.\\    
It is {\it a priori} difficult to retrieve an element  $F_l$, $1 < l
\le d$, from one of the $V_i$. On the other hand, we can verify that
an element $\alpha$ is a $F_l$ by computing $V_i \cap V_i . \alpha
^{-1}$. If $\alpha$ is one of the $F_l$, the intersection will be
$\langle S_{i,1},\dots ,S_{i,n}\rangle$  for all $1 \le i \le n-k$. Then we can
retrieve $\langle F_1, \dots, F_d\rangle$ with the intersection
$\cup_{p=1}^{n-k}V_i.\frac{1}{S_{i,p}}$. As long as $d$ is not a large
number, it is not difficult to extract the whole structure from that. 

A simple way to find a $F_l$ is to test any possibilities in $GF(q^m)$
with the intersection described before. We will see next a more
efficient method which uses the repetition of the element $F_l$, $1 <
l \le d$, in the rows of the LRPC code $H$.

The search for one of the $(F_l)_{1 < l \le d}$ is based on the search
for an element $x$ of one of the $V_i$ such that $x =
\Sigma_{j=1}^{n-k}\lambda_j S_{i,j}.F_l$, with $\lambda_j$ in $GF(q)$
and $l \neq 1$. We know that there exists another element
$x'=\Sigma_{j=1}^{n-k}\lambda_j S_{i,j}.F_1$ in $V_i$. Hence, there
exists a combination in the vector space $V_i . \frac{1}{x}$ equal to
$\frac{1}{F_l}$. This corresponds to a combination of elements in the
row $i$ of $H . \frac{1}{x}$ which would be equal to
$\frac{1}{F_l}$. If we find a combination $c\in GF(q)^{n+t}$ such that
$H . \frac{1}{x} . c^T=(v_1,\dots,v_{n-k})^T$ and $v_i=\frac{1}{F_l}$
for a particular $i$, then we have that $v_i=\frac{1}{F_l}$ for all $1
\le i \le n-k$. Indeed the different $V_i$ are built in a same way
from $H'$ and if a combination $c$ gives $L_ic^T =
\Sigma_{j=1}^{n-k}\lambda_j S_{i,j}$ with $L_i$ the row $i$ of $H'$
then we have  by construction $L_pc^T = \Sigma_{j=1}^{n-k}\lambda_j
S_{p,j}$ with the same $\lambda_j$. Finally, to retrieve this
combination $c$, we can look for $c$ such that
$(L_1.\frac{1}{x_1}-L_2.\frac{1}{x_2})c^T=0$ where $x_1$ is picked as
random in $V_1$ and $x_2$ is generated with the same algorithm that
$x_1$ but using $V_2$.

We obtain a complexity based on the probability of randomly finding a useful
element $x$: $q^{(n-k)(d-1)+t}$.

This point of view shows that even a specific attack on the hidden code $H$ which uses all its particularities will not succeed with well chosen parameters since the complexity
remains exponential. The invertible matrix $P$ in $GF(q)$ seems to
sufficiently mix the vector spaces $V_i$ to make it difficult to
recover  a vector $x$ which could allow one to extract the structure of~$H$.

\section{Proof of Theorem~\ref{thm:forgery}}\label{app:leakage}

Recall that a signature of a message $M$ is a pair $(x',y')$ where $y'$ is
a hashed value of the message $M$ and
$y'=H'x'^T$ and $\rank(x')=r$. If $\A$ is an algorithm that leads to a
forgery with the use of $N$ authentic signatures, then
the algorithm $\A'$ consists of a simulated
version of $\A$ where authentic signatures $(x',y')$ are replaced by
couples $(x'',y'')$ where
$x''$ is randomly and uniformly chosen among vectors of rank-weight
  $e$, and $y''=H'x''^T$ is claimed to be the hashed value of the
  message $M$ output by a random oracle. In the random oracle model,
  the algorithm $\A'$ must behave exactly as algorithm $\A$ and give
  the same output whenever $(x'',y'')$ is statistically
  indistinguishable from $(x',y')$.

We now compare the statistics of $(x',y')$ and $(x'',y'')$.
We have $H'=A(R|H)P$ and
since the transformation:
\begin{align}
  (x',y') &\mapsto (x^a=Px',y^a=A^{-1}y')\label{eq:xaya}\\
  (x'',y'') &\mapsto (x^s=Px'',y^s=A^{-1}y'')\label{eq:xsys}
\end{align}
is one-to-one, comparing the statistics of $(x',y')$ and $(x'',y'')$
amounts to comparing the distributions of $(x^a,y^a)$ (authentic) and 
$(x^s,y^s)$ (simulated).

Now $(x^a,y^a)$ is obtained in the following way: the signer
chooses a subspace $T$ of $GF(q^m)$ together with a random vector
$\tau\in T^t$ of rank $t$ and is given
a vector $u$ which is uniformly distributed in the syndrome space
$GF(q^m)^{n-k}$ and is equal to $A^{-1}h(M)-R\tau^T$. 
Precisely, the signer chooses a random vector $\tau$ of rank $t$, and
sets $T$ to be the subspace generated by its coordinates.
The signer then
proceeds to try to decode $u$, meaning it looks for a subspace $E$
of $GF(q^m)$ that contains $T$ and such that all coordinates of $u$
fall into $\langle FE\rangle$, where $F$ is the space generated by the elements of
the LRPC matrix $H$. He succeeds exactly when the syndrome vector $u$
is
$T$-decodable in the sense of Definition~\ref{def:Tdecodable}: when
this doesn't occur, the decoder aborts.

% Let us say that:
% \begin{enumerate}
% \item when such a space $E$ exists, the syndrome $u$ is {\em $\tau$-valid},
% \item when the syndrome $u$ is $\tau$-valid and when furthermore
% it is possible to uniquely and efficiently determine $E$ from $u$, 
% let
% us say that $u$ is {\em $\tau$-decodable.} To be precise, $u$ is
% $\tau$-decodable when the following conditions hold:
% \begin{enumerate}
% \item \label{enum:ed}the coordinates of $u$ together with the vectors of the subspace $\langle FT\rangle$
%   generate a subspace $U$ of $\f_{q^m}$ of maximum possible rank, namely
%   $$t\dim F + n-k = (t+r)\dim F = r\dim F = rd.$$
%   If this doesn't occur, the decoder aborts.
% \item \label{enum:U} The decoder proceeds to construct the subspace 
%   $$E=F_1^{-1}U\cap F_2^{-1}U$$ of $\f_{q^m}$
%   for some fixed pair $f_1,f_2$ of linearly
%   independent vectors of $F$. The space $E$ should have the required
%   dimension $e$ and we should have $\langle FE\rangle=U$, otherwise the decoder
%   aborts. Note that the above condition (\ref{enum:ed}) implies $\dim \langle FE\rangle = \dim
%   F\dim E$.
% \end{enumerate}
% \end{enumerate}
When the syndrome $u$ is $T$-decodable, the decoder proceeds
to solve the equation
\begin{equation}
  \label{eq:HxH=u}
  Hx_H^T = u
\end{equation}
and then sets 
$$x^a = (\tau,x_H)$$
to create the couple $(x^a,y^a=(R|H)(x^a)^T)$ in \eqref{eq:xaya}.
Recall from Remark 1 in Section~\ref{sec:UsingLRPC}
that the matrix $H$ has been chosen so
that equation~\eqref{eq:HxH=u} (equivalently
equation~\eqref{eq:mappingEn}) always has a unique solution for every
$T$-decodable $u$.

% The important point is that, when $u$ is $\tau$-decodable,
% there is always a unique solution $x_H$ to this equation: this is
% because the mapping
% \begin{eqnarray}
%     E^n& \rightarrow & \langle FE\rangle^{n-k}\label{eq:surjective}\\
%       c& \mapsto     & Hc^T \nonumber
%   \end{eqnarray}
% maps spaces $E^n$ and $\langle FE\rangle^{n-k}$ that are of the same dimension
% (since $\dim FE = \dim F\dim E$) and because the matrix $H$ has been
% chosen so that its associated formal matrix $H_f$ (see definition in Theorem 1) 
% is a $nr \times rd(n-k)$ invertible matrix in $GF(q)$  (parameters are chosen such
% that the matrix is square or of maximal rank $nr$).

We may therefore speak about $T$-decodable {\em couples} $(x_H,u)$, where
$u$ uniquely determines $x_H$ and $x_H$ uniquely determines $u$.
Now, to re-cap, the authentic signer starts with uniformely random $u$, and
whenever $u$ turns out to be non $T$-decodable, 
then we declare a decoding failure and start the whole
process again generating another $\tau$, another random space $T$ and 
another $u$ by
another call to the random oracle $h$ (meaning a counter appended to
the message $M$ is incremented before applying the random hash again).
This keeps happening until we hit a $T$-decodable $u$.
 We see therefore that when it does hit a
$T$-decodable $u$, the couple 
  $$(x_H,u)$$
is uniformly distributed among all $T$-decodable couples.

We now turn to the action of the simulator: what the simulator does is
he tries to generate a uniform $T$-decodable couple $(x_H,u)$ through $x_H$
rather than through $u$ like the signer.

Specifically, the simulator starts with a random subspace $E$ of $GF(q^m)$
of dimension $r$ and an $x''$ with coordinates independently and
uniformly drawn from $E$. Since the transformation \eqref{eq:xsys} $x''\mapsto
x^s=Px''$ is rank-preserving, the simulator is implicitely creating a
uniform vector $x^s$ of $E^n$. Write
$$x^s=(\tau, x_H).$$
With overwhelming probability (at least $1-1/q^{r-t}$), the vector $\tau\in E^t$ is of maximum
rank-weight $t$, since its coordinates are independently and uniformly
chosen in $E$. The vector $\tau$ generates the required random space
$T$. Let $u=Hx_H^T$: note that by construction, all
its coordinates must be in $\langle FE\rangle$. Consider the conditions
(i),(ii),(iii) of Definition~\ref{def:Tdecodable} for $u$ to be $T$-decodable. 

Remember that the first two conditions (i) and (ii) are properties {\em only of the subspace $E$}. When they are not satisfied, no choice of $x_H$ can
yield a $T$-decodable couple $(x_H,u)$. 
When conditions (i) and (ii) are satisfied we have that,
since the mapping \eqref{eq:HxH=u} $x_H\mapsto u$ is invertible,
the vector
$u$ is a uniform random vector in $\langle FE\rangle^{n-k}$. Since $n-k + \dim \langle FT\rangle =
\dim \langle FE\rangle$, the probability that condition (iii) is not satisfied is governed by the probability than a random
vector falls into a given subspace of $\langle FE\rangle$ of co-dimension $1$ and is
of the order of $1/q$: it is also at most $1/(q-1)$ according to
computation
\eqref{eq:1-1/(q-1)}. We also see that  the number of $x_H$
that satisfies condition (iii) is independent of the space $E$ and
is always the same for all $E$ that satisfy conditions (i) and (ii).
This last fact implies that when 
\begin{itemize}
\item A random uniform subspace $E$ is chosen among all possible
  subspaces $E$ of dimension~$r$,
\item a random $x_H$ is chosen in $E^n$,
\end{itemize}
then either $(x_H,u)$ is not $T$-decodable, or $(x_H,u)$ is 
$T$-decodable and is uniformly distributed among all 
$T$-decodable couples.

The simulator has no oracle to tell him when he has produced a
non $T$-decodable couple, he can only hope this doesn't occur.
As long as he produces $T$-decodable couples $(x_H,u)$, then they
are distributed (uniformly) exactly as those that are produced by
the authentic signer and are undistinguishable from them. 
If we call $\pi$ the probability that he produces a non-decodable
$u$, then he can reasonably expect to produce a list of
approximately $N=1/\pi$ signatures $(x'',y'')$ that are undistiguishable
from genuine signatures in the random oracle model.

Consider now the probability $\pi$ that the simulator produces a
non-decodable $u$. It is at most the sum of the probabilities that $E$
does not satisfy (i) and (ii) and the probability $\leq 1/(q-1)$ that
(iii) is not satisfied. The probability that $E$ does not satisfy (i)
and (ii) is at most the probability that the
product space $\langle (F_1^{-1}F + F_2^{-1})E\rangle$ does not have
maximal dimension $(2d-1)r$, as argued in the proof of
Theorem~\ref{thm:density},
and is at most $1/(q-1)$. We obtain therefore $\pi\leq 1/(q-1) +
1/(q-1) = 2/(q-1)\approx 2/q$ which concludes the proof.

\end{document}